\numberwithin{equation}{section}
\newtheorem{theorem}{Theorem}[section]
\begin{document}
\title{Permutation orbifolds of heterotic Gepner models}
\author{M. Maio$^{1}$ and A.N. Schellekens$^{1,2,3}$\\~\\~\\
\\
$^1$Nikhef Theory Group, Amsterdam, The Netherlands
\\
~\\
$^2$IMAPP, Radboud Universiteit Nijmegen, The Netherlands
\\
~\\
$^3$Instituto de F\'\i sica Fundamental, CSIC, Madrid, Spain}

\maketitle

\begin{abstract}
We study orbifolds by permutations of two identical N=2 minimal models within the Gepner construction of four dimensional heterotic strings. 
This is done using the new N=2 supersymmetric permutation orbifold building blocks we  have recently developed. We compare
our results with the old method of modding out the full string partition function. The overlap between these two approaches is
surprisingly small, but whenever a comparison can be made we find complete agreement. The use of permutation building blocks
allows us to use the complete arsenal of simple current techniques that is available for standard Gepner models, vastly extending what
could previously be done for permutation orbifolds. In particular, we consider $(0,2)$ models, breaking of $SO(10)$ to subgroups,
weight-lifting for the minimal models and B-L lifting. Some previously observed phenomena, for example concerning family number 
quantization, extend to this new class as well, and in the lifted models three family models occur with abundance comparable to two or four.
%
%
\end{abstract}
\vskip -19cm
\hbox{ \hskip 12cm NIKHEF/2011-004  \hfil}

\clearpage
\tableofcontents

\section{Introduction}

Heterotic string theory \cite{Gross:1984dd} is the oldest approach towards the construction of the standard model in string theory. It
owes its success to the fact that the gross features of the standard model appear to come out nearly automatically:
families of chiral fermions in representations that are structured as in $SO(10)$-based GUT models.
Presumably we know only a very small piece of this landscape of perturbative heterotic strings.
Several computational methods are available, based either on space-time geometry or world-sheet conformal field theory.
The latter category can be further subdivided into free and interacting field theory methods. But all these methods have 
obvious but inestimable limitations, making it difficult to determine how much of the heterotic landscape has really been explored so far.

In the case of the conformal field theory (CFT) methods \cite{Belavin:1984vu}, these limitations are due to the constraints imposed by world-sheet supersymmetry and
modular invariance, for which we only know some classes of special solutions.
A general heterotic CFT consists of a right-moving sector that has $N=2$ world-sheet supersymmetry and a non-supersymmetric
left-moving sector. Most existing work has been limited either to free CFTs (bosons, fermions or orbifolds) for these two sectors, 
or to interacting CFTs where the bosonic sector is essentially a copy of the fermionic one. Furthermore the interacting CFTs
themselves have mostly been limited to tensor products of $N=2$ minimal models \cite{Gepner:1987qi,Gepner:1987vz}.

Already in the late eighties of last century ideas were implemented to reduce some of these limitations of interacting CFTs.
Instead of minimal models, Kazama-Suzuki  models were used \cite{Kazama:1988qp}. Another extension was to consider permutation orbifolds
of $N=2$ minimal models \cite{Klemm:1990df,Fuchs:1991vu}. But both of these ideas could only be analyzed in a very limited way themselves. The real power of
interacting CFT construction comes from the use of simple current invariants  
\cite{Schellekens:1989am,Schellekens:1989dq,Intriligator:1989zw,GatoRivera:1991ru,Kreuzer:1993tf},
which greatly enhance the number and scope
of the possible constructions. In particular
the left-right symmetry of the original Gepner models could be broken by
considering asymmetric simple current invariants \cite{Schellekens:1989wx}, allowing for example a breaking of the canonical $E_6$ subgroup to
$SO(10)$, $SU(5)$, Pati-Salam models  or even just the standard model (with some additional factors in the
gauge group). However, precisely this powerful tool is not available at present in either Kazama-Suzuki models or
permutation orbifolds. The original computations were limited to diagonal invariants, where with a combination of a variety
of tricks the spectrum could be obtained. Up to now, all that is available in the literature is a very short list of Hodge numbers
and singlets for $(2,2)$ spectra with $E_6$ gauge groups \cite{Klemm:1990df,Fuchs:1991vu,Font:1989qc,Schellekens:1991sb}.
To use the full power of simple current methods we need to know the exact CFT spectrum
and the fusion rules of the primary fields of the  building blocks. 
The former has never been worked out for Kazama-Suzuki models, and the latter
was not available for permutation orbifolds until last year. Using pioneering work by Borisov, Halpern and Schweigert  \cite{Borisov:1997nc} and an extension of these results to fixed point resolution matrices \cite{Maio:2009kb,Maio:2009cy,Maio:2009tg} we have
constructed the $\mathbb{Z}_2$ permutation orbifolds of $N=2$ minimal models \cite{Maio:2010eu}. These can now be used as building blocks
in heterotic CFT constructions, on equal footing, and in combination with all other building blocks, such as  the minimal
models themselves and free fermions. Furthermore we can now for the first time apply the full simple current machinery in exactly
the same way as for the minimal models. 

Meanwhile, another method was added to this toolbox, allowing us to advance a bit more deeply into the heterotic landscape, and
away from free or symmetric CFTs. This is called ``heterotic weight lifting" \cite{GatoRivera:2009yt}, a replacement of $N=2$ building blocks in the bosonic
sector by isomorphic (in the sense of the modular group) $N=0$ building blocks (more precisely, replacing $N=2$ building blocks together
with the extra $E_8$ factor). This method requires knowledge of the exact CFT spectrum, which indeed we have. A variant of this idea is
the replacement of the $U(1)_{B-L}$ factor ($\times E_8$) by an isomorphic CFT. This has been called ``B-L lifting".

In a series of papers \cite{GatoRivera:2010gv,GatoRivera:2010xn,GatoRivera:2010fi,GatoRivera:2010xnb}, all combinations of the aforementioned tools were applied to Gepner models. The focus of this work was on three important features of the
spectra of these models: do the chiral fermions really form standard model families, do the non-chiral fermions respect
standard model charge quantization, and do three families occur reasonably often. The answers to these questions can be
summarized as ``sometimes, no, and yes". A substantial fraction of the chiral spectra with broken GUT symmetry
is organized in standard model families, but the rest has chiral exotics. In the exact spectrum 
there are almost always vector-like states which violate
the charge quantization rules of the standard model and hence would lead to light fractionally charged particles, unless they are
lifted by moving into moduli space, away from the special RCFT points.  The conclusion on the number of families was a bit more
optimistic. Despite the difficulties of standard Gepner models to yield three families, this problem disappears if one moves away
from symmetric CFTs. Just using asymmetric MIPFs is not sufficient, but if one really makes the left and right CFTs distinct by
using heterotic weight lifting or B-L lifting one obtains a normal, more or less exponential distribution $\propto e^{-\alpha N}$ with
$\alpha$ in the range $.2 - .3$, where $N$  is the number of families. The number three is not strongly suppressed  in these distributions, and occurs for a total
of about $15\%$ of the $N$-family models. 

The purpose of this paper is to put all these ingredients together using permutation orbifolds of $N=2$ minimal models as building 
blocks in combination with minimal models. We want to do this for the following reasons:
\begin{itemize}
\item{Check the consistency of the permutation orbifold CFTs we presented in \cite{Maio:2010eu}. Chiral heterotic spectra are very sensitive to the correctness of conformal weights and ground state dimensions of the CFT, as well as the
correctness of the simple current orbits. This is especially true for weight-lifted spectra, because
they have non-trivial Green-Schwarz anomaly cancellations.}
\item{Compare our results with those of previous work on permutation orbifolds \cite{Klemm:1990df,Fuchs:1991vu}. These results
were obtained using a rather different method, by applying permutations directly to complete heterotic string spectra.}
\item{Check if the aforementioned trends on fractional charged and family number are confirmed also in the class of permutation orbifolds.}
\item{Add a few more items to the growing list of potentially interesting three-family interacting CFT models.}
\end{itemize}

%

The key ingredient of the present work is
is our previous paper \cite{Maio:2010eu}, where we studied permutations, together with extensions in all possible order, and found very interesting novelties. For example, we determined how to construct a supersymmetric permutation of minimal models: in particular, the world-sheet-supersymmetry current in the supersymmetric orbifold turns out to be related to the anti-symmetric representation of the world-sheet-supersymmetry current of the original minimal model. When the symmetric representation is used, instead, one ends up with a conformal field theory, which is isomorphic to the supersymmetric orbifold, but it is not supersymmetric itself. 

In the extended permuted orbifolds so-called exceptional simple currents appear, which originate from off-diagonal representations. Generically, there are many of them, depending on the particular model under consideration, and they do not have fixed points. However, when and only when the ``level'' is equal to $k=2$ mod $4$, four of all these exceptional currents do admit fixed points. As a consequence, in those cases the knowledge of the modular $S$ matrix is plagued by the existence of non-trivial and unknown $S^J$ matrices (one $S^J$ matrix for each exceptional current $J$). The full set of $S^J$ matrices is available for standard $\mathbb{Z}_2$ orbifolds (see \cite{Maio:2009kb,Maio:2009cy,Maio:2009tg}), but not for their (non\-)supersymmetric extensions, due to these four exceptional currents with fixed points. This is known as the fixed point resolution problem 
\cite{Schellekens:1989uf,Schellekens:1990xy,Fuchs:1996dd,Fuchs:1995zr,Schellekens:1999yg,Fuchs:1995tq}.
The knowledge of the full set of the $S^J$ matrices is more important in orientifold models \cite{Pradisi:1996yd,Pradisi:1995pp,Fuchs:2000cm} than in heterotic strings, because in the former case one cannot even compute spectra if these matrices are not known.

In this paper we consider permutations in Gepner models. One starts with Gepner's standard construction where the internal CFT is a product of $N=2$ minimal models. Sometimes there are (at least) two $N=2$ identical factors in the tensor product. When it is the case, we can replace these two factors with their permutation orbifold. Moreover, one also has to impose space-time and world-sheet supersymmetry, which is achieved by suitable simple-current extensions.

The plan of the paper is as follows. 
In section \ref{Section: Permutation orbifold} we start reviewing the permutation orbifold, which is the main idea used in this paper. 
In section \ref{Section: Heterotic Gepner models} we review the standard constructions of Heterotic Gepner models. In section \ref{Section: Permutation orbifold of N=2 minimal models} we review the main ingredients and the most relevant results of $\mathbb{Z}_2$ permutation orbifolds when applied to $N=2$ minimal models. 
In section \ref{Section: Lifts} we describe the heterotic weight lifting and the B-L lifting procedures, which allow us to replace the trivial $E_8$ factor plus either one $N=2$ minimal model or the $U(1)_{B-L}$ with a different CFT, which has identical modular properties, in the bosonic (left) sector. 
In section \ref{Section: Comparison} we compare our results on (2,2) spectra with the known literature. 
In section \ref{Section: Results} we present our phenomenological results concerning the family number distributions,
gauge groups, fractional charges and other relevant data. 
In appendix \ref{Section: Simple current invariants} we derive a few facts about simple current invariants.  Appendix A.2 contains
tables summarizing the main results for the four cases (standard Gepner models and the three kinds of lifts).

\section{Permutation orbifold}
\label{Section: Permutation orbifold}
In this section let us recall a few properties of the generic permutation orbifold \cite{Borisov:1997nc}, restricted to the $\mathbb{Z}_2$ case:
\begin{equation}
\mathcal{A}_{\rm perm}\equiv \mathcal{A}\times \mathcal{A}/\mathbb{Z}_2\,.
\end{equation}
If $c$ is the central charge of $\mathcal{A}$, then the central charge of $\mathcal{A}_{\rm perm}$ is $2c$.
The typical\footnote{See \cite{Maio:2010eu} for exceptions.} weights of the fields are:
\begin{itemize}
\item $h_{(i,\xi)}=2 h_i$
\item $h_{\langle i,j\rangle}=h_i+h_j$
\item $h\widehat{(i,\xi)}=\frac{h_i}{2}+\frac{c}{16}+\frac{\xi}{2}$
\end{itemize}
for diagonal, off-diagonal and twisted representations. 
Sometimes it can happen that the naive ground state has dimension zero: then one must go to its first non-vanishing descendant whose weight is incremented by integers. 

The permutation orbifold characters $X$ are related to the characters $\chi$ of the mother theory $\mathcal{A}$ in the following way \cite{Borisov:1997nc}:
\begin{subequations}
\label{BHS characters}
\begin{eqnarray}
X_{\langle i,j \rangle}(\tau)&=&\chi_{i}(\tau)\cdot\chi_{j}(\tau) \\
X_{(i,\xi)}(\tau)&=&\frac{1}{2}\chi_{i}^2(\tau)+e^{i\pi\xi}\frac{1}{2}\chi_{i}(2\tau) \\
X_{\widehat{(i,\xi)}}(\tau)&=&\frac{1}{2}\chi_{i}(\frac{\tau}{2})+e^{-i\pi\xi}\,T_i^{-\frac{1}{2}}\,\frac{1}{2}\chi_{i}(\frac{\tau+1}{2})
\end{eqnarray}
\end{subequations}
where $T_i^{-\frac{1}{2}}=e^{-i\pi(h_i-\frac{c}{24})}$. 
The character expansion:
\begin{equation}
\chi(\tau)=q^{h_{\chi}-\frac{c}{24}}\,\sum_{n=0}^\infty d_n q^n \qquad \qquad ({\rm with}\,\,q=e^{2i\pi\tau})
\end{equation}
(the $d_n$'s are non-negative integers) implies a similar expansion for the $X$:
\begin{equation}
X(\tau)=q^{h_X-\frac{c}{12}}\,\sum_{n=0}^\infty D_n q^n
\end{equation}
where the $D_n$'s are expressed in terms of the $d_n$'s as follows:
\begin{subequations}
\label{d-D relations}
\begin{eqnarray}
D^{\langle i,j \rangle}_k &=& \sum_{n=0}^k d_n^{(i)}\,d_{k-n}^{(j)} \\
D^{(i,\xi)}_k &=&  \frac{1}{2} \sum_{n=0}^k d_n^{(i)}\,d_{k-n}^{(i)}+
\left\{
\begin{array}{lr}
0 & {\rm if}\,\,k={\rm odd} \\
\frac{1}{2}\,e^{i\pi\xi}\,d_{\frac{k}{2}}^{(i)} & {\rm if}\,\,k={\rm even}
\end{array}
\right. \\
D^{\widehat{(i,\xi)}}_k &=& d_{2k+\xi}^{(i)}
\end{eqnarray}
\end{subequations}

Using these characters, one can compute their modular transformation and find the orbifold $S$ matrix. It was determined in \cite{Borisov:1997nc} and will be referred to as $S^{BHS}$. 
In theories where the permutation orbifold is extended by means of simple currents, one has to face the problem of fixed point resolution, namely one has to determine a set of ``$S^J$ matrices'', one for each simple current $J$ \cite{Schellekens:1990xy}. 
For standard permutation orbifolds, it was shown in \cite{Maio:2009kb} that the simple currents can be only of diagonal type, i.e. $(J,\psi)$ where $J$ is a simple current of the mother theory and $\psi=0$ or $1$ for $\mathbb{Z}_2$ orbifolds. This set of matrices was determined in \cite{Maio:2009tg}. This fixed point resolution plays a crucial r\^ole in determining the fusion rules of the supersymmetric
permutation orbifold, as explained in \cite{Maio:2010eu}. However, we will not need the explicit formulas for the fixed point resolution matrices in
the present paper, and therefore we will not present them here.


\section{Heterotic Gepner models}
\label{Section: Heterotic Gepner models}
In this section we review the construction of four-dimensional heterotic string theory. The starting point is a set of bosons $X^\mu$ ($\mu=0,\dots,3$) for both the right and left movers, a right-moving set of NSR fermions $\psi^\mu$, plus corresponding ghosts, and an internal CFT with central charges $(c_L,c_R)=(22,9)$, that we denote by $\mathscr{C}_{22,9}=\mathscr{C}_{22}\times\mathscr{C}_{9}$. Observe that the right-moving superconformal field theory $(X,\psi)$+ghosts has central charge $c=3$. Equivalently, one can think of it as the conformal field theory of two bosons $X^i$ and their fermionic superpartners $\psi^i$ in light-cone gauge, which form an $SO(2)_1$ abelian algebra, with central charge $c=1$.

The next step is to replace the NSR $SO(2)_1$ fermions by a set of $13$ bosonic fields living in the maximal torus of an $(E_8)_1\times SO(10)_1$ affine Lie algebra. This is the bosonic string map \cite {LLS}, which transforms the fermionic CFT into a bosonic one with same modular properties. The total right-moving CFT has now central charge equal to $c_R=2+9+13=24$, as the left-moving bosonic theory. Hence, all four-dimensional heterotic strings correspond to all compactified bosonic strings with an internal sector:
\begin{equation}
\mathscr{C}_{22,9}\times \left( (E_8)_1\times SO(10)_1 \right)_R \,.
\end{equation}
To summarize:
\begin{eqnarray}
\hbox{Left-moving}&&(X^\mu,{\rm ghost})\times \mathscr{C}_{22}\nonumber\\
\hbox{Right-moving}&&(X^\mu,{\rm ghost})\times \mathscr{C}_{9}\times (E_8)_1\times SO(10)_1\nonumber
\end{eqnarray}
with $\mu=0,\dots,3$. Equivalently, in light-cone gauge one uses $X^i$ instead of $(X^\mu,{\rm ghost})$.


In the right-moving sector all CFT building blocks have  $N=2$ worldsheet supersymmetry. This implies the existence of two operators
with simple current fusion rules: the worldsheet supercurrent $T_F$ and the spectra flow operator $S_F$. In general, the
internal CFT in the fermionic sector is itself built out of $N=2$ building blocks, that have such currents as well. 

In order to preserve right-moving world-sheet supersymmetry, the total supercurrent $T^{\rm st}_F+T_F^{\rm int}$ must have a well-defined periodicity, since it couples to the gravitino. Here, $T^{\rm st}_F=\psi^\mu \partial X_\mu$ is the world-sheet supercurrent in space-time and $T_F^{\rm int}$ is the supercurrent of the internal sector. Hence the allowed states will have the same spin structure in all the subsectors of the tensor product, namely the R (NS) sector of $SO(10)_1$ must be coupled to the R (NS) sector of the internal CFT. This result is achieved by an integer-spin simple current extension of the full right-moving algebra by the current given by the product of the supercurrents $T^{\rm st}_F\cdot T_F^{\rm int}$: it corresponds to projecting out all the combinations of mixed spin structures. When the internal CFT is a product of many sub-theories, as in the case of Gepner models, each with its own world-sheet supercurrent $T_{F,i}$, then one has to extend the full
  right-moving algebra by all the currents $T^{\rm st}_F\cdot  T_{F,i}$. In simple current language this means that we extend the chiral
algebra by all currents
\begin{equation}
\label{WS}
W_i= (0,\ldots,0,T_{F,i},0,\ldots,0;V)\ , \end{equation}
where we use a semi-colon to separate the internal and space-time part, and we use the standard notation $0,V,S,C$ for $SO(10)_1$ simple currents (or conjugacy classes).

A sufficient and necessary condition for space-time supersymmetry is the presence of a right-moving spin-$1$ chiral current transforming as an $SO(10)_1$ spinor. Hence this current must be equal to the product of the spinor $S$ of the $SO(10)_1$, which has spin $h=\frac{5}{8}$, times an operator $S^{\rm int}$ from the Ramond sector of the internal CFT $\mathscr{C}_9$, which must then have spin $h=\frac{3}{8}$. This last value saturates the chiral bound $h\geq\frac{c}{24}$ for the internal right-moving CFT which has central charge $c=9$, hence $S^{\rm int}$ corresponds to a Ramond ground state. 

Among the Ramond ground states, one is very special. $N=2$ supersymmetry possesses a one-parameter continuous automorphism of the algebra, known as spectral flow, which, when restricted to half-integer values of the parameter, changes the spin structures and maps Ramond fields to NS fields, hence uniquely relating fermionic to bosonic fields. In particular, under spectral flow, the NS field corresponding to the identity is mapped to a Ramond ground state which has $h=\frac{c}{24}$ and is called the spectral-flow operator. Not surprisingly, the spectral flow operator is related to the $N=1$ space-time supersymmetry charge. We will denote it as $S_F$.

In our set-up of four dimensional heterotic string theories, $N=1$ space-time supersymmetry is achieved again by a simple current extension. The current in question is the product of the space-time spin field $S_F^{\rm st}$ with $S_F^{\rm int}$, where $S_F^{\rm int}$ is the spectral-flow operator. If the internal CFT is built out of many factors, then $S_F^{\rm int}=\bigotimes_i S_{F,i}$, where $S_{F,i}$ is the spectral-flow operator in each factor. 
In simple current language, the space-time supersymmetry condition amounts to extending the chiral algebra of the CFT by
the simple current
\begin{equation}
\label{SS}
S_{\rm susy}=(S_{F,1},\ldots,S_{F,r};S)\ ,\end{equation}
where $r$ denotes the number of factors.
Obviously these simple current extensions must be closed under fusion, in combination with all world-sheet supersymmetry extensions
discussed above. Modular invariance of the final theory is then guaranteed by the simple current construction.

So far everything holds for any combination of superconformal $N=2$ building blocks. The only ones available in practice
(prior to this paper) are suitable combinations of free bosons and/or fermions, and $N=2$ minimal models. These are 
unitary finite-dimensional representations of the $N=2$ superconformal algebra, which exist only for  $c\leq3$.
 They are labelled by an integer $k$, in terms of which the central charge is 
\begin{equation}
c=\frac{3k}{k+2}\,.
\end{equation}
It is not a coincidence that $c$ is equal to the central charge of the $SU(2)_k$ affine Lie algebra. In fact, the $N=2$ minimal models admit a description in terms of the coset\footnote{According to this notation, $U(1)_N$ has $N$ primary fields, with $N$ always even.}
\begin{equation}
\frac{SU(2)_k \times U(1)_4}{U(1)_{2(k+2)}}\,.
\end{equation}
According to this decompositions, $N=2$ primary fields are products of parafermionic representations (which are primaries of the $\frac{SU(2)}{U(1)}$ algebra) times a bosonic exponential and the $S$ matrix is a product of the matrices $S^{SU(2)_k}$, $S^{U(1)_4}$ and $\left(S^{U(1)_{2(k+2)}}\right)^{-1}$. Representations are labelled by three integers $(l,m,s)$, where $l$ is an $SU(2)_k$ quantum number and $m$ and $s$ are $U(1)$ labels, satisfying the constraint $l+m+s=$even. The range is: $l=0,\dots,k$, $m=-k-1,\dots,k+2$, $s=-1,\dots,2$ ($s=0,2$ for NS sector, $s=\pm1$ for R sector). Moreover, fields are pairwise identified according to:
\begin{equation}
\phi_{l,m,s}\sim\phi_{k-l,m+k+2,s+2}\,,
\end{equation}
which is realized as a formal simple current extension.

Now consider the right-moving algebra of the heterotic string. The internal CFT $\mathscr{C}_9$ can be realized as a product of a sufficient number of $N=2$ minimal models such that 
\begin{equation}
\label{MinSum}
\sum_i^r\frac{3k_i}{k_i+2}=9\,,
\end{equation}
so the full algebra is
\begin{equation}
\bigotimes_i (N=2)_i \otimes (E_8)_1\otimes SO(10)_1
\end{equation}
and representations are labelled by
\begin{equation}
\bigotimes_i (l_i,m_i,s_i) \otimes (1) \otimes (s_0)\,.
\end{equation}
Observe that the $(E_8)_1$ algebra has only one representation, i.e. the identity, and it is often omitted in the product. Here $s_0$ denotes one of the four $SO(10)_1$ representations, $s_0=O,V,S,C$. As discussed above, we impose world-sheet and space-time supersymmetry by simple-current extensions. The world-sheet supercurrent for each $N=2$ minimal model is labelled by $T_{F,i}=(0,0,2)$ and the spectral-flow operator is $S_{F,i}=(0,1,1)$.  These are used in the world-sheet and space-time chiral algebra extensions (\ref{WS}) and 
(\ref{SS}).

These chiral algebra extensions are mandatory only in the fermionic sector. However, modular invariance does not allow
an extension in just one chiral sector. The most common way of dealing with this is to use exactly the same CFT in the 
left-moving sector, with exactly the same extensions. Of course any $N=2$ CFT is a special example of an $N=0$ CFT.
This construction leads to $(2,2)$ theories, with spectra analogous to Calabi-Yau compactifications, characterized by Hodge
number pairs and with a certain number of families in the $(27)$ of $E_6$. 
On the other hand, modular invariance is blind to most features of the CFT spectrum. It only
sees the modular group representations. This makes it possible to use in the left, bosonic, sector a different set of extension
currents than on the right. In particular one can replace the image of the space-time current by something else, thus breaking
$E_6$ to $SO(10)$. Furthermore one can break world-sheet supersymmetry in the bosonic sector. One can even go a step
further and break $SO(10)$ and $E_8$ to any subgroup, as long as this breaking can be restored by means of simple
currents. Those currents are then mandatory in the fermionic sector (since otherwise the bosonic string map cannot be used),
but can be replaced by isomorphic alternatives in the left sector. In general, we will call this class $(0,2)$ models. 

All the aforementioned possibilities will be considered in this paper, except $E_8$ breaking. The $SO(10)$ breaking
we consider is to $SU(3)\times SU(2) \times U(1)_{30} \times U(1)_{20}$, where the first three factors are the
standard model gauge groups with the standard $SU(5)$-GUT normalization for the $U(1)$. The fourth factor corresponds in
certain cases to $B-L$.


%

\section{Permutation orbifold of $N=2$ minimal models}
\label{Section: Permutation orbifold of N=2 minimal models}
In \cite{Maio:2010eu} the permutation orbifold of $N=2$ minimal models was studied. Extensions and permutations were performed in all possible orders and a nice structure was seen to arise, together with exceptional off-diagonal simple currents appearing in the extended orbifolds. In this section we recall the procedure of how to build a supersymmetric permutation orbifolds starting from $N=2$ minimal models. We will restrict ourselves to $\mathbb{Z}_2$ permutations, because a formalism to build permutation orbifold CFTs for higher
cyclic orders is not yet  available.

Consider the internal CFT $\mathscr{C}_9$ to be a tensor product of $r$ minimal models such that the total central charge is equal to $9$. We denote such a theory as\footnote{Note that here we mean the unextended tensor product. In particular, world-sheet supersymmetry 
extensions are not implied.}
\begin{equation}
(k_1,k_2,k_3\dots,k_r)\,,
\end{equation}
each $k_i$ parametrizing the $i^{\rm th}$ minimal model. Suppose that two of the $k_i$'s are equal: then the two corresponding minimal models are also identical and one can apply the orbifold mechanism to interchange them. We will label with brackets the block corresponding to the orbifold CFT: e.g. if $k_2=k_3$, then the permutation orbifold will be denoted by 
\begin{equation}
(k_1,\langle k_2,k_3\rangle\dots,k_r)\,.
\end{equation}
Multiple permutations are of course also possible. For convenience, we will follow the standard notation, used extensively in literature, of ordering the minimal models according to increasing level, namely $k_i\leq k_{i+1}$. Consequently, identical factors will always appear next to each other.
The orbifolded theory has the same central charge of the original one, namely $\sum_i^r c_i=9$, and hence can be used to build four dimensional string theories. 

Note that by $\langle k,k\rangle$ we mean the {\it supersymmetric} permutation orbifold, which, as explained in \cite{Maio:2010eu}, is obtained from the minimal model with level $k$ by first constructing the non-supersymmetric BHS orbifold (which we will denote as
$[ k,k]$), extending this CFT by
the anti-symmetric combination of the world-sheet supercurrent $(T_F,1)$, and resolving the fixed points occurring as a result
of that extension. This fixed point resolution promotes some fields to simple currents. All these simple currents will be used
to build MIPFs, using the general formalism presented in \cite{Kreuzer:1993tf}. 

Fixed point resolution enters the discussion at various points, and to prevent confusion we summarize here some relevant
facts. In the following we consider chains of extensions of the chiral algebra of a {\rm CFT}, and denote them as $({\rm CFT})_n$. Here
$({\rm CFT})_0$ is the original {\rm CFT}, $({\rm CFT})_1$ a first extension, $({\rm CFT})_2$ a second extension etc. In this process the chiral
algebra is enlarged in each step. The number of primary fields can decrease because some are projected out and others are
combined into new representations, but it can also increase due to fixed point resolution (apart from some special cases
the decrease usually wins over the increase). We are not assuming that each extension is itself ``indecomposable" ({\it i.e.} not
the result of several smaller extensions), but in practice the case of most interest will be a chain of extensions of order 2.
The following facts are important.
\begin{itemize}
\item{Simple currents $J$ are characterized by the identity $S_{0J}=S_{00}$, where $S$ is the modular transformation matrix.
For all other fields $i$, $S_{0i} > S_{00}$.}
\item{In an extension by a simple current of order $N$, the matrix elements $S_{0f}$ of fixed point fields are reduced by a factor of $N$. 
For this reason a fixed point field of $({\rm CFT})_n$ can be a simple current of $({\rm CFT})_{n+1}$. We will call these ``exceptional simple currents".}
\item{Exceptional simple currents can be used to build new MIPFs in $({\rm CFT})_{n+1}$, but such MIPFs are not simple
current MIPFs of $({\rm CFT})_n$. They are exceptional MIPFs.}
\item{If the fixed point resolution matrices of $({\rm CFT})_n$ are known, 
we can promote the exceptional simple currents of $({\rm CFT})_{n+1}$ to ordinary ones . This makes it possible to treat them on equal footing with
all other simple currents of $({\rm CFT})_{n+1}$.}
\item{Obviously, this process can be iterated: exceptional simple currents of $({\rm CFT})_{n+1}$ can themselves have fixed points, which can become
simple currents of $({\rm CFT})_{n+2}$.}
\item{If we know the fixed point resolution matrices of $({\rm CFT})_n$, we also know all the fixed point resolution matrices
of the ordinary simple currents of $({\rm CFT})_{n+1}$, but if the exceptional simple currents have fixed points, there is currently
no formalism available to determine their fixed point resolution matrices.}
\end{itemize}

In \cite{Maio:2009kb,Maio:2009cy,Maio:2009tg} we have developed a formalism for all fixed point resolution matrices
of the BHS permutation orbifolds. This plays the r\^ole of $({\rm CFT})_0$ in the foregoing. The supersymmetric permutation orbifold
$\langle k, k\rangle$ is $({\rm CFT})_1$. It always has exceptional simple currents, but only for $k=2~\hbox{mod}~4$ they have fixed 
points. As explained above, we cannot resolve these fixed points, but in heterotic spectrum computations
this is not necessary. This would be necessary if we want to go beyond spectrum computations to determine couplings. In spectrum
computations, fixed point fields $f$ appear in the partition function as character combinations of the form
\begin{equation}
N \bar\chi_f(\bar\tau)\chi_f(\tau), \  \  \  \  \ N>1 ,
\end{equation}
which is resolved into a certain number of distinct representations $(f,x)$ that contribute to the partition function as
\begin{equation}
\sum_x \bar\chi_{f,x}(\bar\tau)\chi_{f,x}(\tau), \ \ \ \ \chi_{f,x}(\tau) = m_x \chi_{f}(\tau)\ , \ \ \ \ \  \sum_x (m_x)^2=N \ .
\end{equation}
Note that for $N\geq4$ the last condition has several solutions, and to find out which one is the right one the
twist on the stabilizer of the fixed point must be determined \cite{Fuchs:1996dd}. However, here
we merely want to add up the values of $N$ for a left-right combination of interest, and the individual values of $m_x$ do not matter.

A few fields of the supersymmetric orbifold will be relevant in the following, all of untwisted type. They are:
\begin{itemize}
\item The symmetric representation of the spectral flow operator $(S_F,0)$, with $S_F=(0,1,1)$. It will be relevant to make the whole theory supersymmetric;
\item The world-sheet supercurrent of the supersymmetric orbifold, that we denote by $\langle 0,T_F \rangle$\footnote{Actually, since $\langle 0,T_F \rangle$ is a fixed point of $(T_F,1)$ in the unextended orbifold, there exist two fields $\langle 0,T_F \rangle_\alpha$ (with $\alpha=0,1$) in the supersymmetric orbifold corresponding to the two resolved fixed points. One can use any of them, since they produce the same CFT.};
\item The anti-symmetric representation of the identity, denoted by $(0,1)$. We will call it  the ``un-orbifold current" since the extension by this current undoes the orbifold, giving back the original tensor product. 
\end{itemize}

The un-orbifold current exists in the BHS orbifold $[k,k]$ as well as in the 
supersymmetric orbifold $\langle k, k\rangle$. Denoting extension currents by means of a subscript, we have
the following CFT relations
\begin{eqnarray*}
(k,k) &= [ k, k]_{\rm unorb}\\
(k,k)_{(T_F,T_F)} &= \langle k, k\rangle_{\rm unorb}
\end{eqnarray*}

In general, the full set of simple current MIPFs obtained from the permutation orbifold CFT  $(k_1,\langle k_2,k_3\rangle\dots,k_r)$ will
have a partial overlap with those of straight tensor product $(k_1, k_2,k_3,\ldots,k_r)$. Since the set of simple currents
of  $(k_1,\langle k_2,k_3\rangle\dots,k_r)$ includes the un-orbifold current one might expect that the latter set is entirely included in the
former. However, this is not quire correct, since the supersymmetric permutation orbifold has fewer simple currents than the
tensor product from which it originates, as explained above. In the extension chain,  $\langle k, k\rangle_{\rm unorb}$
is $({\rm CFT})_2$. In both steps in the chain

\begin{eqnarray*}
\hfill ({\rm CFT})_0 &= & [k,k] \\ &\downarrow\\  ({\rm CFT})_1 &=& \langle k, k\rangle\\ & \downarrow\\  ({\rm CFT})_2& =& \langle k, k\rangle_{\rm unorb} = (k,k)_{(T_F,T_F)}
\end{eqnarray*}
exceptional simple currents appear. Those of the first step are promoted to ordinary simple currents using
fixed point resolution in the BHS orbifold. We then work directly with $\langle k, k\rangle$ as a building block, but
by doing so we cannot use the exceptional simple currents emerging in the second step. In this case the
exceptional simple currents could be used by working with  $(k,k)_{(T_F,T_F)}$ directly, but then we are back in the unpermuted theory.
So the point is not that these MIPFs are unreachable, just that they cannot be reached using the simple currents of $\langle k, k\rangle $. Obviously, if we were to use a different exceptional simple current in the second extension, such that $({\rm CFT})_2$ is
a new, not previously known CFT  with exceptional simple currents, some of its MIPFs cannot be reached using simple current methods  
neither from $({\rm CFT})_1$ nor from $({\rm CFT})_2$. In all cases, one can try to derive such MIPFs explicitly as exceptional invariants, and
they can then be taken into account in heterotic spectrum computations, but this requires tedious and strongly case-dependent 
calculations. But in this paper we only consider simple current invariants, without any claim regarding completeness of the
set of MIPFs we obtain.

The phenomenon of  exceptional simple currents is nothing new, and occurs for example in the D-invariant
of $A_{1,4}$ (which is isomorphic to $A_{2,3}$), or the extension of the tensor products of two Ising models extended
by the product of the fermions (turning it into a free boson). 

The simplest explicit example occurs for $k=1$. In this case the discussion can be made a bit more explicit, since
the permutation orbifold is itself a minimal model, namely the one with $k=4$: 
\begin{eqnarray*}
 \langle 1,1\rangle  &=& (4)\\
 (1,1)_{(T_F,T_F)} &=&(4)_{\rm unorb} = (4)_D
\end{eqnarray*}
The minimal $k=1$ model  has 12 simple currents, and hence the tensor product $(1,1)$ has 144. To make
the tensor product world-sheet supersymmetric we have to extend it by $(T_F,T_F)$, reducing the number of simple currents by a factor
of four to 36. The $k=4$ minimal model has 24
simple currents. If we extend the $k=4$ minimal model by the un-orbifold current (which can be identified as such in the $\langle 1,1\rangle$
interpretation), these 24 original simple currents are reduced
to 12. Since the resulting CFT is isomorphic to $(1,1)_{(T_F,T_F)}$ there must be 24 additional simple currents. Indeed there are, but they 
are exceptional. They are related to the aforementioned exceptional currents in the D-invariant of $A_{1,4}$. This is
also the only example of exceptional simple currents in $N=2$ minimal models, and clearly in this case no MIPFs are
missed, since we can explicitly consider $(1,1)$ as well as $(4)_D$.  There might exist additional examples of exceptional
simple currents in tensor products of $N=2$ minimal models.

If the chiral algebra contains the un-orbifold current of a permutation orbifold, we obviously get nothing new. Therefore we
demand that this current is not in the chiral algebra. In general, it would be possible  to forbid it in either the left or the
right chiral algebra. This is already sufficient to find new cases. We do this, for example, with the $SU(5)$ extension
currents of the standard model, which are required in the right (fermionic string) chiral algebra, but not in the left one. However,
it turns out that the un-orbifold current is local with respect to all other simple currents.

In appendix \ref{Section: Simple current invariants} we prove a small theorem about simple current invariants. Consider a simple current  modular invariant partition function
\begin{equation}
Z(\tau,\bar{\tau})=\sum_{k,\,l}\bar{\chi}_k(\bar{\tau})M_{kl}\chi_l(\tau)\,.
\end{equation}
In the theorem it is shown that:
if a current $J$ that is local with respect to all currents used to construct the modular invariant appears on the right hand side (holomorphic sector) of the algebra, then
it will also appear on the left hand side (anti-holomorphic sector):
\begin{equation}
M_{0J}\neq 0 \qquad\Leftrightarrow \qquad M_{J0}\neq 0\,.
\end{equation}
Furthermore we show that the un-orbifold current is local with respect to all other currents.
Therefore the existence of the un-orbifold current on one side implies its existence also on the other side.
Hence it is sufficient to forbid its occurrence in either the left or the right sector.

However, there are a few cases where it cannot be forbidden at all, because it is generated by combinations
of world-sheet and space-time supersymmetry in the right (fermionic) sector, where such chiral algebra extensions 
are required. In general, a tensor product is extended by the currents $S_{\rm susy}$ and $W_i$, as explained in the previous 
section. 

If $k$ is even, the un-orbifold current does not appear on the orbit of the Ramond spinor current $S_F$, and hence can never
be generated. For arbitrary $k$ we have in the supersymmetric permutation orbifold
\begin{equation}
(S_F,0)^{2(k+2)}=
\left\{
\begin{array}{lc}
(0,0) & {\rm if}\,\,k\,\,{\rm even}\\
(0,1) & {\rm if}\,\,k\,\,{\rm odd}
\end{array}
\right.
\,,
\end{equation}
so that for $k$ odd one can obtain the un-orbifold current as a power of $S_F$. Note that instead of $2(k+2)$ one could use
any odd multiple of $2(k+2)$. In the tensor product $S_F$ is combined with the spinor currents of all the other factors,
which will be raised to the same power. Now note that in minimal models of level $k$ the following is true
\begin{equation}
(S_F)^{2(k+2)}=
\left\{
\begin{array}{lc}
0 & {\rm if}\,\,k\,\,{\rm even}\\
T_F & {\rm if}\,\,k\,\,{\rm odd}
\end{array}
\right.
\,.
\end{equation}
Furthermore, the value $2(k+2)$ is the first non-trivial power for which either the identity or the world-sheet supercurrent is
reached. It follows that if the tensor product contains a factor with $k_i$ even, 
the complete susy current  $(S_{F,1},\ldots,S_{F,r},S)$ must be raised to a power that is a multiple of four in order
to reach either the identity or a world-sheet supercurrent. This is true for minimal model factors as well as supersymmetric permutation
orbifolds $\langle k_i, k_i\rangle$. 

Consider then a tensor product $(k_1,\ldots,k_{m-1},\langle k_m,k_m\rangle,k_{m+1},\ldots,k_r)$. 
Take the susy current 
$$(S_{F,1},\ldots ,S_{F,m-1},(S_F,0),S_{F,m+1},\ldots S_{F,r};S)$$ 
to the power $2 M$, where $M$ is the smallest common multiple of $k_i+2$, for all $i$ (including $i=m$). If all $k_i$ are odd, 
this yields 
\begin{equation}
\label{Power}
(T_{F,1},\ldots,T_{F,m-1},(0,1),T_{F,m-1},\ldots T_{F,r};V)
\end{equation}
Since this is a power of an integer spin current, the
susy current, it must have integer spin. Therefore the number of $T_{F,i}$ must be odd. Indeed, it is not hard to show that
eqn. (\ref{MinSum}) can only be satisfied with all $k_i$ odd if the total number of factors, $r$, is odd. It then follows that all entries
$T_{F,i}$  as well as the representation $V$ of $SO(10)_1$ can be nullified by world-sheet supersymmetry. Hence it follows that
the un-orbifold current of $\langle k_m,k_m\rangle$ is automatically in the chiral algebra. It also follows that if one of the $k_i$ 
is even the un-orbifold current is {\it not} in the chiral algebra generated by $S_{\rm susy}$ and $W_i$. The same reasoning can be
applied to tensor products containing more than one permutation orbifold. The conclusion is that the un-orbifold currents of each factor separately are not generated by $S_{\rm susy}$ and $W_i$, but if all $k_i$ (in minimal models as well as the permutation
orbifolds) are odd, the combination $(0,\ldots,(0,1),\ldots,(0,1),\ldots,0;0)$, with an un-orbifold component in each permutation
orbifold, will automatically appear. Obviously, if there is more than one permutation orbifold factor this does not undo the permutation. 

The set of tensor combinations with only odd factors is rather limited, namely
\begin{eqnarray*}
&(1,1,1,1,1,1,1,1,1)\\
 &(3,3,3,3,3)\\
 &(1,3,3,3,13)\\
 &(1,1,7,7,7)\\
 &(1,1,5,5,19)\\
 &(1,1,3,13,13)
 \end{eqnarray*}
 We will not consider permutations of $k=1$, because $\langle1,1\rangle=4$, and hence nothing new can be found by allowing $\langle1,1\rangle$. Furthermore there is no need to
 consider any single permutations in the foregoing tensor products. However, we do expect the combinations 
 $(3,\langle 3,3\rangle,\langle 3,3\rangle)$, $(4,\langle 7,7\rangle,7)$ and 
 $(4,\langle 5,5\rangle,19)$ to yield something new.

For technical reasons in this paper we consider only permutations of minimal models having level $k \leq 10$: computing time and memory use become just too large for large $k$. Nevertheless, the interval $k\in[2,10]$ still covers almost all the standard Gepner models where at least two factors can be permuted.

\subsection{Permutations of permutations}
An additional thing that one can try to do (and which we can in principle do with our formalism, since we know all the relevant data that are needed) is to consider permutations of permutations. Permutations of permutations are possible only for a few Gepner models, because one would need to have a number of factors in the tensor product which is larger than four and with at least four identical minimal models. Out of the 168 possibilities, there are only a few combinations that have these properties. They are:
\begin{eqnarray}
 &(6,6,6,6)&\nonumber\\
 &(1,4,4,4,4)&\nonumber\\
 &(3,3,3,3,3)&\nonumber\\
 &(1,2,2,2,2,4)&\nonumber\\
 &(2,2,2,2,2,2)&
\end{eqnarray}
As before we restrict the $k>1$. Observe that the maximal level is $k=6$, so these cases are actually all the possibilities that one can consider and one can relax here our previous restriction to $k\leq10$.

The approach one should take is the following. Consider a block of four identical minimal models. As before we can permute the factors pairwise and obtain a tensor product of two larger blocks, but again identical. Hence we can permute them again and end up with only one big block which replaces the four ones that we started with:
\begin{displaymath}
\xymatrix{
(k,k,k,k) \ar[d] \\
(\langle k,k \rangle, \langle k,k\rangle) \ar[d]\\
\big\langle \langle k,k \rangle, \langle k,k\rangle\big\rangle
}
\end{displaymath}

Although straightforward, we will not perform this calculation in this paper. There are only very few cases to analyze, namely the five listed above, but, on the one hand, it is a pretty lengthy computation and on the other hand we do not expect drastically different spectra in comparison with normal permutations.

\section{Lifts}
\label{Section: Lifts}
In \cite{GatoRivera:2009yt} the authors describe a new method for constructing heterotic Gepner-like four-dimensional string theories out of $N=2$ minimal models. The method consists of replacing one $N=2$ minimal model together with the $E_8$ factor by a non-supersymmetric CFT with identical modular properties. Generically this method produces a spectrum with fewer massless states. Surprisingly, it is possible to get chiral spectra and gauge groups such as $SO(10)$, $SU(5)$ and other subgroups including the Standard Model. However, the most interesting feature is probably the abundant appearance of three-family models, which are very rare in standard Gepner models \cite{GatoRivera:2010gv}. Let us review how it is done in more detail, at least in the simplest case.

Start from the coset representation of the minimal model:
\begin{equation}
\frac{SU(2)_k\times U(1)_4}{U(1)_{2(k+2)}}\,,
\end{equation}
subject to field identification by the simple current $(J,2,k+2)$. Here $J$ is the simple current of the $SU(2)_k$ factor and the $U_N$ fields are labelled by their charges as $0,\dots,N-1$. 
The product of the $N=2$ minimal model and the $E_8$ factor is then
\begin{equation}
\left(\frac{SU(2)_k\times U(1)_4}{U(1)_{2(k+2)}}\right)_{(J,2,k+2)}\times E_8\,,
\end{equation}
where the brackets denote this identification. The next step is to remove the identification and mod out $E_8$ by $U(1)_{2(k+2)}$: the new CFT is then
\begin{equation}
SU(2)_k\times U(1)_4\times \frac{E_8}{U(1)_{2(k+2)}}\,.
\end{equation}
Finally we restore the identification by a standard order-2 current extension of the resulting CFT. This procedure works provided we can embed the $U(1)_{2(k+2)}$ factor into $E_8$. Some examples of how to embed $U(1)_{2(k+2)}$ into $E_8$ are given in \cite{GatoRivera:2009yt}. Finally, one can check explicitly that the modular $S$ and $T$ matrices are the same as for the $N=2$ 
minimal model times $E_8$, as they must be by construction. 
The resulting CFT is $SU(2)_k\times U(1)_4\times X_7$, where $X_7$ is the reminder of $E_8$ after dividing out $U(1)_{2(k+2)}$. $X_7$ has central charge $7$ and modular matrices $S$ and $T$ given by the complex conjugates of those of $U(1)_{2(k+2)}$ (since the ones of $E_8$ are trivial). Generically, this procedure raises the weights of the primaries in the new CFT, hence the name ``weight lifting''.

As it appears from above, the lifting of Gepner models is achieved by only a slight modification of standard Gepner models. All one has to
do is to shift the weights of some fields in the left-moving CFT by a certain integer, and replace the ground state dimensions by another,
usually larger, value. 
In \cite{GatoRivera:2009yt} a list of possible lifts is given for $N=2$ minimal models at level $k$. Not for any level there exists a lift and sometimes for fixed $k$ there are more lifts. When applied to standard Gepner models, a lot of new ``lifted'' Gepner models are generated. Notationally, if a Gepner model is denoted by $(k_1,\dots,k_i,\dots, k_r)$, the corresponding lifted model will be denoted by $(k_1,\dots,\hat{k}_i,\dots, k_r)$, where the lift is done on the $i^{\rm th}$ $N=2$ factor. If for a given $k$ there
exists more than one lift, we use a tilde to denote it.

In \cite{GatoRivera:2010fi} a different class of lifts was considered, the so called B-L lifts. In this case one replaces the $U(1)_{20}$ (with 20 primaries), that is the remainder of $SU(3)\times SU(2)\times U(1)$ embedded in $SO(10)$. In the Standard Model the abelian factor is the $U(1)_{Y}$ hypercharge (denoted also as $U(1)_{30}$, with 30 primaries). The $U(1)_{20}$ that we replace here corresponds to $B-L$, hence the name ``B-L lifting''. 

It is not possible to simply replace the $U(1)_{20}$ by an isomorphic CFT with 20 primaries, central charge $c=1$ and same modular $S$ and $T$ matrices, since all the $c=1$ CFTs are classified. Again, what one can do is to add the $E_8$ factor and replace the $E_8\times U(1)_{20}$ block, which has central charge $c=9$. 
As it turns out, there are only two possible B-L lifts, that we denote by $A$ and $B$. In terms of 
compactifications from ten dimensions these possibilities can be distinguished as follows. If one compactifies the $E_8\times E_8$ heterotic string one gets $SO(10)\times E_8$ in
four dimensions. The standard model can be embedded in $SO(10)$ (trivial lift, {\it i.e.} standard, unlifted $B-L$) or $E_8$ (lift A). If one compactifies the $SO(32)/\mathbb{Z}_2$ heterotic
string, one gets $SO(26)$, in which the standard model can then be embedded via an $SO(10)$ subgroup; this yields lift B. As explained in  \cite{GatoRivera:2010fi} both lift A and
lift B yield, perhaps counter-intuitively, chiral spectra. In the unlifted case, the number of families is typically a multiple of 6, and sometimes 2; for lift A, the family number quantization unit was found to be usually 1, whereas for lift B it was usually 2.

In this paper we will apply all these kinds of lifts to permuted Gepner models. This means that we make, when possible, all sorts of known lifts (namely, standard weight lifting and B-L lifting) for the $N=2$ factors that do \textit{not} belong to the sub-block(s) of the permutation orbifold. Note that permutations and lifting act independently: a given minimal
model factor is either unchanged, or lifted, or interchanged with another, identical factor. It may well be possible to construct lifted CFTs for the permutation orbifolds themselves,
but no examples are known, and they are in any case not obtainable by the methods of  \cite{GatoRivera:2009yt}, because there only a single minimal model factor is lifted. There
is one exception to this: there is one known simultaneous lift of two minimal model factors with $k=1$. Conceivably one could apply a permutation to those two identical
factors in combination with this lift. We have however not investigated this possibility.

\section{Comparison with known results}
\label{Section: Comparison}

To compare our results with previous ones on permutation orbifolds \cite{Klemm:1990df,Fuchs:1991vu}, it is  important to understand the
differences in these approaches. These authors first construct the basic Gepner model with all world-sheet and space-time supersymmetry
projections already in place in the left- as well as the right-moving sector.
 
They start from either the diagonal (A-type) invariants of all the minimal models, or the D and E-type
(exceptional) invariants. They then apply a cyclic permutation to the minimal model factors that are identical. They allow for additional
phase symmetries occurring in combination with the permutations. This combined operation
is applied to the full partition function. 

By contrast, we first build an $N=2$ permutation orbifold, then tensor it with other building blocks (either minimal models or other
$N=2$ permutation orbifolds), then impose world-sheet and space-time supersymmetry, but only on the fermionic sector, and consider general simple current modular invariants. 

So the differences can be summarized as follows
\begin{itemize}
\item{In \cite{Klemm:1990df,Fuchs:1991vu} general cyclic $\mathbb{Z}_L$ permutations are considered, while our results are limited to $L=2$.}
\item{In \cite{Klemm:1990df,Fuchs:1991vu} extra phases are modded out in combination with the permutations.}
\item{In \cite{Klemm:1990df,Fuchs:1991vu} permutations of D and E-invariants are considered.}
\item{We only consider permutations of factors with $2 \leq k \leq 10$.}
\item{We consider general simple current invariants.}
\item{We consider not only $(2,2)$ but also $(0,2)$ invariants and breaking of $SO(10)$.}
\end{itemize}

In order to make a comparison we will ignore the last point and focus on $(2,2)$ models. 
Since simple current invariants
include D-invariants as special cases, and because they involve monodromy phases of currents with respect to fields, one might expect
that at least part of the limitations in the second and third point are overcome. Exceptional invariants can be taken into account
in our method by multiplying the simple current modular matrix with an explicit exceptional modular matrix. Indeed, in standard Gepner
models we {\it have} taken them into account, and analysed the class of $(1,16^*,16^*,16^*)$ three-family models \cite{Gepner:1987hi}.
In the present case one could easily use exceptional invariants in those factors that are not permuted. To use permutations of exceptional
invariants we would first have to construct the exceptional MIPF explicitly in the permutation orbifold CFT, which can be done in principle
with a tedious computation. The first point is, however, much harder to overcome, because it would involve extending the BHS construction
to higher cyclic orders.  

Now let us see how the comparison works out in practice. In \cite{Fuchs:1991vu} a table is presented with
all models where cyclic permutations, phase symmetries and cyclic permutations together with phase symmetries have been modded out. For each model the authors give the number of generations $\rm n_{27}$, anti-generations $\rm n_{\overline{27}}$ and singlets $\rm n_{1}$. 
The first two numbers are equal to Hodge numbers of Calabi-Yau manifolds, namely $h_{21}= \rm n_{27}$ and 
$h_{11}=\rm n_{\overline{27}}$.
These quantities are first obtained by using modular invariance of the partition function of the cyclically-orbifolded Gepner models and are then compared with the same quantities derived by using topological arguments applied to the smooth Calabi-Yau manifold after that the singularities have been resolved. 
The number of families is specified by $\rm n_{gen}=n_{27}-n_{\overline{27}}$. The total number of singlets is strongly dependent on the multiplicities of the (descendants) states of the $N=2$ minimal models, which can be read off directly from the character expansions. The singlet number $\rm n_{1}$ turns out to be crucial for differentiating different models with equal $\rm n_{27}$ and $\rm n_{\overline{27}}$. Our comparison is based on these three numbers.  In table (\ref{HodgeNumbers}) we list the values 
we obtained for these three numbers in the cases we considered. Note that these are the numbers obtained without any
simple current extensions or automorphisms. The cases marked with a $*$ are $K_3 \times T_2$ type compactifications
with an $E_7$ spectrum; the numbers that are indicated are the ones obtained after decomposing $E_7$ to $E_6$.

In comparing the A-type invariants without phase symmetries, we get agreement, but in a somewhat unexpected way.
In \cite{Fuchs:1991vu} one-permutation models are not considered, because the authors argue that they always produce the same spectra as unpermuted Gepner models. However, 
in this paper we do manage to build one-permutation models as explained in section \ref{Section: Permutation orbifold of N=2 minimal models}. The only Gepner combinations for which the one-permutation models yield nothing new are the purely-odd combinations. 
Furthermore, the one-permutation orbifolds do indeed yield new results. For example, for the combinations $(2,2,2,2,\langle 2,2\rangle)$
the three numbers are $(90,0,284)$ as opposed to $(90,0,285)$ for the unpermuted case; for $(6,6,\langle6,6\rangle)$ we find 
$(106,2,364)$ as opposed to $(149,1,503)$; for $(\langle 3,3\rangle,10,58)$ we get $(75,27,392)$ as opposed to $(85,25,425)$. These
three example illustrate three distinct situations. In the first example, the only difference with the unpermuted case is that the number
of singlets is reduced by one. In the second example, the Hodge pair $(106,2)$ does occur for a non-trivial simple current invariant
of the tensor product $(6,6,6,6)$, namely $(6_A,6_A,6_A,6_D)$, but with 365 singlets instead of 364. In the
last example the Hodge pair $(75,27)$ does not occur for any simple current MIPF of $(3,3,10,58)$ (the only other combination that
occurs is $(53,41,401)$ plus the mirrors, so that even the Euler number of the permutation orbifold is new)\rlap.\footnote{For a complete
list of Hodge number and singlets of Gepner models see \cite{hodge}.}

In order to make a non-trivial comparison between
our spectra and those of \cite{Fuchs:1991vu} we have to look at Gepner models with {\it two} permutations. It turns out that our spectra (specified by $\rm n_{27}$, $\rm n_{\overline{27}}$ and $\rm n_{1}$) do agree with those of \cite{Fuchs:1991vu}. However, to get the full match, we always have to extend the model by one current. This current is (see section \ref{Section: Permutation orbifold of N=2 minimal models}) the double un-orbifold current, which has the un-orbifold current in each of the two factors corresponding to the permutation orbifold and the identity current in the remaining factors. 
Also in this case we already get new spectra even if we do {\it not} extend by this current. Consider for example $(\langle 6,6\rangle,\langle 6,6\rangle)$. As mentioned above, the $(6,6,6,6)$ gives $(149,1,503)$; the completely unextended spectrum we
get for the $(\langle 6,6\rangle,\langle 6,6\rangle)$ yields $(77,1,269)$; if we extend the two permutation orbifold CFTs by
the current combination $((0,1),(0,1))$ (where $(0,1)$ is the un-orbifold current) we find $(83,3,301)$, which is precisely the result quoted in \cite{Fuchs:1991vu}  for
the permutation orbifold. It is noteworthy that \cite{Fuchs:1991vu} lists a triplet $(77,1,271)$ for the combination
$(6_D,6_D,6_A,6_A)$, which from our perspective is a simple current invariant of $(6,6,6,6)$. Again we see two spectra with a minor
difference only in the number of singlets, which we will comment on below. In  one case we could not make
a comparison, because in \cite{Fuchs:1991vu} no result is listed for $(2,2,\langle 2,2\rangle,\langle 2,2\rangle)$ without extra phases.
In all other cases our results agree with \cite{Fuchs:1991vu}.
The need for extending by a combination of un-orbifold currents suggests that such currents are automatically generated or implicitly
present
in the
formalism used in \cite{Fuchs:1991vu}, for reasons we do not fully understand, but which are presumably related to an interchange
in the order of two operations: permutation and simple current extension. This is also consistent with the fact that these authors
find no new results for single permutations: if an un-orbifold current is automatically present in that case, one inevitably returns to the
unpermuted case. Note that for $(3,\langle3,3\rangle,\langle3,3\rangle)$ we have seen before that the separate un-orbifold current of
each permutation orbifold is automatically present in the chiral algebra, and hence so is the combination of the two. Therefore in this
case we do not have to extend by $(0,(0,1),(0,1))$ to find agreement with \cite{Fuchs:1991vu} because the extension is already automatically present.

Let us now compare the cases with extra phase symmetries.
In almost all cases, using the simple-current formalism, we recover for a given suitably-extended model the same Hodge numbers and the same singlet number as in those spectra where both the phase symmetry and the permutation symmetry have been modded out. In a sense, these phase symmetries correspond to simple current extensions or automorphisms. The only two exceptions, out of the many successful instances, both coming from the $2^6$ Gepner model (nr. 21 of Table II in \cite{Fuchs:1991vu}) with two permutations and phase symmetries, are
\begin{itemize}
\item (21)(43)56, 111100 ($\rm n_{27}=21$, $\rm n_{\overline{27}}=21$, $\rm n_{1}=180$, $\chi=0$),
\item (21)3(54)6, 333111 ($\rm n_{27}=44$, $\rm n_{\overline{27}}=8$, $\rm n_{1}=199$, $\chi=-72$),
\end{itemize}
where the first entry is the permutation orbifold and the second one is the phase symmetry. We were not able to find these two cases using our procedure.

There are a few other cases that we do not have, but for reasons that are easy to understand. Consider model nr. 168 in the same table. It correspond to the $6^4$ Gepner model. The double permutation that we reproduce is the one labelled as
\begin{itemize}
\item $6_A 6_A 6_A 6_A$: (21)(43) ($\rm n_{27}=83$, $\rm n_{\overline{27}}=3$, $\rm n_{1}=301$, $\chi=-160$).
\end{itemize}
The other two, with D invariants
\begin{itemize}
\item $6_A 6_A 6_D 6_D$: (21)(43) ($\rm n_{27}=45$, $\rm n_{\overline{27}}=1$, $\rm n_{1}=181$, $\chi=-88$),
\item $6_D 6_D 6_D 6_D$: (21)(43) ($\rm n_{27}=35$, $\rm n_{\overline{27}}=3$, $\rm n_{1}=154$, $\chi=-64$),
\end{itemize}
are not present. However these are not comparable with our $6^4$ since they come out of a different construction. In fact, the D invariant is obtained as a simple current automorphism of the $k=6$ Gepner models by the $SU(2)_k$ current $(k,0,0)$ (with $k=6$). This current has spin $h=\frac{k}{4}=\frac32$. In \cite{Fuchs:1991vu} the authors 
consider the permutation of two such $k=6$ models, each with such a simple current automorphism. 
This is different from what happens in this paper. Here, we immediately replace the block by its permutation orbifold; moreover, when we extend it by the current $(T_F,1)$ to build the supersymmetric permutation orbifold, the off-diagonal field $\langle (0,0,0)(6,0,0)\rangle$ with spin $h=\frac{3}{2}$ (the obvious candidate for creating the automorphism invariant) is not a simple current. We expect that the permutation
orbifold of two $6_D$ models is present as an exceptional invariant of $\langle6,6\rangle$.

The spectra mentioned in the last two paragraphs, that were present in \cite{Fuchs:1991vu} but absent in our results, might also be understood
as follows.  As explained in \ref{Section: Permutation orbifold of N=2 minimal models}, one may consider simple current extension
chains of the form
$$ {\rm (CFT)}_0 \rightarrow {\rm (CFT)}_1 \rightarrow {\rm (CFT)}_2 \rightarrow \ldots $$
In this chain, the supersymmetric permutation orbifold is ${\rm (CFT)}_1$. We can use all its simple currents to build
MIPFs, and in particular we find all simple current extensions ${\rm (CFT)}_2$. However there are situations where ${\rm (CFT)}_2$
itself has new simple currents that are exceptional, and whose orbits cannot be fully resolved because we do not have
the complete fixed point resolution formalism for ${\rm (CFT)}_1$ available. Therefore MIPFs generated by such second order
exceptional simple currents cannot be obtained. At best, one could try to get them by explicit computation as exceptional MIPFs
of ${\rm (CFT)}_1$. The problem of unresolvable fixed points occurs precisely for supersymmetric permutation orbifolds when
$k=2\mod 4$, and therefore might be relevant precisely in these examples. 

\LTcapwidth=14truecm
\begin{center}
\vskip .7truecm
\begin{longtable}{|c||c|c|c|}\caption{{\bf{Hodge data for permutation orbifolds of Gepner models.}}}\\
\hline
 \multicolumn{1}{|c||}{Tensor product}
& \multicolumn{1}{c|}{$h_{21}$}
& \multicolumn{1}{l|}{$h_{11}$ }
& \multicolumn{1}{c|}{Singlets}\\
\hline
\endfirsthead
\multicolumn{4}{c}%
{{\bfseries \tablename\ \thetable{} {\rm-- continued from previous page}}} \\
\hline 
 \multicolumn{1}{|c||}{model}
& \multicolumn{1}{c|}{$h_{21}$}
& \multicolumn{1}{l|}{$h_{11}$}
& \multicolumn{1}{c|}{Singlets}\\
\hline
\endhead
\hline \multicolumn{4}{|r|}{{Continued on next page}} \\ \hline
\endfoot
\hline \hline
\endlastfoot\hline
\label{HodgeNumbers}
$(1,1,1,1,1,1,\langle 2,2\rangle)$ &    $23^*$&     $23^*$&   177 \\ 
$(1,1,1,1,1,\langle 4,4\rangle)$ &    84 &      0 &   249 \\ 
$(1,1,1,1,\langle 10,10\rangle)$ &    57 &      9 &   248 \\ 
$(1,1,1,1,\langle 2,2\rangle,4)$ &    35 &     11 &   229 \\ 
$(1,1,1,\langle 2,2\rangle,2,2)$ &    $23^*$&     $23^*$&   175 \\ 
$(1,1,1,2,\langle 6,6\rangle)$ &    $23^*$&     $23^*$&   173 \\ 
$(1,1,1,\langle 4,4\rangle,4)$ &    73 &      1 &   242 \\ 
$(1,1,1,\langle 3,3\rangle,8)$ &    $23^*$&     $23^*$&   173 \\ 
$(1,1,1,\langle 2,2\rangle,\langle 2,2\rangle)$ &    $23^*$&     $23^*$&   173 \\ 
$(1,1,2,2,\langle 4,4\rangle)$ &    35 &     11 &   211 \\ 
$(1,1,\langle 2,2\rangle,2,10)$ &    46 &     10 &   234 \\ 
$(1,1,4,\langle 10,10\rangle)$ &    75 &      3 &   279 \\ 
$(1,1,\langle 6,6\rangle,10)$ &    37 &     13 &   211 \\ 
$(1,1,\langle 2,2\rangle,4,4)$ &    51 &      3 &   250 \\ 
$(1,1,\langle 2,2\rangle,\langle 4,4\rangle)$ &    35 &     11 &   209 \\ 
$(1,2,2,\langle 10,10\rangle)$ &    61 &      1 &   251 \\ 
$(1,\langle 2,2\rangle,2,2,4)$ &    61 &      1 &   260 \\ 
$(1,2,4,\langle 6,6\rangle)$ &    51 &      3 &   235 \\ 
$(1,2,\langle 4,4\rangle,10)$ &    62 &      2 &   241 \\ 
$(1,2,\langle 3,3\rangle,58)$ &    41 &     17 &   273 \\ 
$(1,\langle 4,4\rangle,4,4)$ &    84 &      0 &   279 \\ 
$(1,\langle 2,2\rangle,10,10)$ &    89 &      5 &   343 \\ 
$(1,\langle 3,3\rangle,4,8)$ &    41 &      5 &   219 \\ 
$(1,\langle 2,2\rangle,5,40)$ &    35 &     35 &   329 \\ 
$(1,\langle 2,2\rangle,6,22)$ &    68 &      8 &   297 \\ 
$(1,\langle 2,2\rangle,7,16)$ &    43 &     19 &   289 \\ 
$(1,\langle 2,2\rangle,8,13)$ &    27 &     27 &   249 \\ 
$(1,\langle 2,2\rangle,\langle 2,2\rangle,4)$ &    61 &      1 &   259 \\ 
$(\langle 2,2\rangle,2,2,2,2)$ &    90 &      0 &   284 \\ 
$(2,2,2,\langle 6,6\rangle)$ &    73 &      1 &   251 \\ 
$(2,2,\langle 4,4\rangle,4)$ &    51 &      3 &   242 \\ 
$(2,2,\langle 3,3\rangle,8)$ &    41 &      5 &   218 \\ 
$(2,2,\langle 2,2\rangle,\langle 2,2\rangle)$ &    90 &      0 &   283 \\ 
$(2,\langle 10,10\rangle,10)$ &   105 &      3 &   380 \\ 
$(2,\langle 8,8\rangle,18)$ &    79 &      7 &   322 \\ 
$(\langle 2,2\rangle,2,3,18)$ &    65 &      5 &   279 \\ 
$(2,\langle 7,7\rangle,34)$ &    63 &     15 &   312 \\ 
$(\langle 2,2\rangle,2,4,10)$ &    69 &      3 &   265 \\ 
$(\langle 2,2\rangle,2,6,6)$ &    86 &      2 &   297 \\ 
$(2,\langle 2,2\rangle,\langle 6,6\rangle)$ &    73 &      1 &   250 \\ 
$(3,\langle 6,6\rangle,18)$ &    51 &     11 &   254 \\ 
$(3,\langle 5,5\rangle,68)$ &    53 &     29 &   328 \\ 
$(3,\langle 8,8\rangle,8)$ &    99 &      3 &   346 \\ 
$(3,\langle 3,3\rangle,\langle 3,3\rangle)$ &    59 &      3 &   228 \\ 
$(4,4,\langle 10,10\rangle)$ &    94 &      4 &   334 \\ 
$(4,\langle 6,6\rangle,10)$ &    55 &      7 &   238 \\ 
$(4,\langle 5,5\rangle,19)$ &    41 &     17 &   238 \\ 
$(4,\langle 7,7\rangle,7)$ &    66 &      6 &   270 \\ 
$(\langle 5,5\rangle,5,12)$ &    83 &      5 &   308 \\ 
$(\langle 6,6\rangle,6,6)$ &   106 &      2 &   364 \\ 
$(\langle 4,4\rangle,10,10)$ &   101 &      5 &   370 \\ 
$(\langle 3,3\rangle,10,58)$ &    75 &     27 &   392 \\ 
$(\langle 3,3\rangle,12,33)$ &    47 &     31 &   306 \\ 
$(\langle 3,3\rangle,13,28)$ &    97 &     13 &   404 \\ 
$(\langle 3,3\rangle,18,18)$ &   125 &      9 &   490 \\ 
$(\langle 2,2\rangle,3,3,8)$ &    39 &     15 &   249 \\ 
$(\langle 2,2\rangle,4,4,4)$ &    60 &      6 &   285 \\ 
$(\langle 4,4\rangle,5,40)$ &    65 &     17 &   334 \\ 
$(\langle 4,4\rangle,6,22)$ &    70 &     10 &   304 \\ 
$(\langle 4,4\rangle,7,16)$ &    79 &      7 &   308 \\ 
$(\langle 4,4\rangle,8,13)$ &    48 &     12 &   242 \\ 
$(\langle 3,3\rangle,9,108)$ &    69 &     49 &   466 \\ 
$(\langle 6,6\rangle,\langle 6,6\rangle)$ &    77 &      1 &   269 \\ 
$(\langle 2,2\rangle,\langle 4,4\rangle,4)$ &    51 &      3 &   240 \\ 
$(\langle 2,2\rangle,\langle 3,3\rangle,8)$ &    41 &      5 &   216 \\ 
$(\langle 2,2\rangle,\langle 2,2\rangle,\langle 2,2\rangle)$ &    90 &      0 &   282 \\ 
$(1,\langle 2,2\rangle,\langle 10,10\rangle)$ &    61 &      1 &   250 \\ 
$(\langle 4,4\rangle,\langle 10,10\rangle)$ &    75 &      3 &   273 \\ 
$(1,\langle 4,4\rangle,\langle 4,4\rangle)$ &    73 &      1 &   234 \\ 
\end{longtable}
\end{center}

As already mentioned, the list of Hodge numbers and singlets in table (\ref{HodgeNumbers}) is obtained
without any simple current extensions other than those required to get a $(2,2)$ model. The complete list obtained
with arbitrary simple currents will be posted on the website of one of us \cite{hodge}. 

Although the 
results in table (\ref{HodgeNumbers}) are for $(2,2)$ models, the focus of the present paper was on $(0,2)$ models.
We can compare the results with those of \cite{GatoRivera:2010gv} and ask what permutation orbifolds add. Consider
first the set of $(0,2)$ models closest to $(2,2)$ models, namely those with an $E_6$ gauge symmetry. They are
characterized by the same three numbers ${\rm n}_{27}, {\rm n}_{\overline{27}}$ and  ${\rm n}_1$, but since there
is not necessarily a world-sheet supersymmetry in the bosonic sector they may not have a Calabi-Yau interpretation. For simplicity
we will refer to these  as ``pseudo Hodge pairs" and ``pseudo Hodge triplets".
In
the complete set of standard Gepner models without exceptional invariants we obtained a total of 1418\footnote{For the standard, unpermuted Gepner models, the number of genuine 
Hodge number pairs with world-sheet  supersymmetry in both sectors is 906. A list can be found on the website \cite{hodge}.
This includes Hodge numbers from asymmetric simple
current MIPFs not suitable for orientifold models.}
different pseudo Hodge  pairs
and 9604 different pseudo Hodge triplets. For the genuine permutation orbifolds (without extensions by un-orbifold currents)
these numbers are respectively  498 and 3830. Note that some permutation orbifolds with $k > 10$ were not considered.
How many of the permutation orbifold numbers are new? If we combine the data for pseudo Hodge pairs and remove identical ones,
we obtain a total of 1447 pseudo Hodge pairs, so that the total has increased by a mere 29. But if we look at pseudo Hodge triplets, the
increase is much more substantial. This number increases from 9604 to 12145, an increase of 2541 or about $26\%$. We tentatively
conclude that permutation orbifolds mainly give new points in existing moduli spaces. The following observation is further evidence
in that direction.

One remarkable feature of the permutation orbifold spectra is the occurrence of identical Hodge numbers and a number of singlets
that is almost the same. For example, in the set of permutation orbifolds obtained from the $(2,2,2,2,2,2)$ tensor product
we find spectra with (genuine) Hodge numbers $(90,0)$, and either 282, 283, 284 or 285 singlets.  A closer look at the spectrum
reveals what is going on here. We also compute the number of massless vector bosons in these spectra, and it turns out that
this is respectively 2,3,4 and 5 (in addition to those of $E_6$) in these cases. This is consistent with the occurrence of a Higgs mechanism that has made
one or more of the vector bosons heavy by absorbing the corresponding number of singlets. So apparently we are finding
points in the same moduli space, but with a vev for certain moduli fields so that some of the $U(1)$'s are removed. 
This is expected to occur in Gepner models, but it is nice to see this happen entirely within RCFT. The same observation was
made in \cite{Fuchs:1991vu}.
The reduction of the number of $U(1)$'s
by itself has a straightforward reason: each $N=2$ model has an intrinsic $U(1)$, and replacing two minimal models by a
permutation reduces the number of $U(1)$'s by 1. Hence the $(2,2,2,2,2,2)$ model generically has five $U(1)$'s (six, minus one 
combination
that becomes an $E_6$ Cartan-subalgebra generator), and $(\langle 2,2\rangle,\langle 2,2\rangle,\langle 2,2\rangle)$ generically has
only two. However, the number of vector bosons can be larger than that because the simple current MIPFs add extra
generators to the chiral algebra. Indeed, among the MIPFs of $(\langle 2,2\rangle,\langle 2,2\rangle,\langle 2,2\rangle)$ we
do not only find $(90,0,282,2)$ (where the last entry is the number of $U(1)$'s), but also $(90,0,283,3)$ and $(90,0,284,4)$.

\section{Results}
\label{Section: Results}

The CFT approach, based on simple currents extensions, turns out to be extremely powerful. Although we have considered in this paper only order-two permutations, the number of new modular invariant partition functions or, equivalently, the number of new spectra for each model is huge, in the order of a few thousands. Simple currents allow us to generate a huge number of four dimensional spectra.

Here we discuss the more phenomenological aspects of our results.
Conceptually this is very similar to work on unpermuted Gepner models presented in
 \cite{GatoRivera:2010gv,GatoRivera:2010xn,GatoRivera:2010fi}, to which we refer for
 more detailed  descriptions. 
 In these papers several, mostly empirical, observations were made regarding the resulting
 spectra.
 The main question of interest here is if these observations continue to hold as we extend the
 scope of RCFTs considered.

 \subsection{Gauge groups}
 
 We allow as gauge groups $SO(10)$ and seven rank 5 subgroups, namely the  Pati-Salam group
 $SU(4)\times SU(2)\times
 SU(2)$, the Georgi-Glashow GUT group $SU(5)\times U(1)$, two  global realizations of left-right symmetric algebra
 $SU(3)\times SU(2)\times SU(2)\times U(1)$, and three global realizations of the standard model algebra
  $SU(3)\times SU(2)\times U(1)\times U(1)$. Counted as Lie-algebras there are just five of them, but the last two
 come in several varieties when we describe them as CFT chiral algebras. These are distinguished by
 the fractionally charged (here ``charge" refers to unconfined electric charge)
 representations that are allowed. For the left-right algebra this can be either $\frac13$ or $\frac16$, 
 (we call these ``LR, Q=1/3" and ``LR, Q=1/6" respectively) 
 and for the standard model this can be $\frac12$, $\frac13$ or $\frac16$
 (SM, Q=1/2, 1/3 or 1/6).  In the string chiral algebra
 these different global realizations are distinguished by the presence of certain integer spin currents. If these
 currents have conformal weight one, they manifest themselves in the massless spectrum as extra gauge bosons. 
 This happens in particular for the highly desirable global group corresponding to the standard model with only
 integer unconfined electric charge. In this class of heterotic strings this necessarily implies an extension 
 of the standard model to (at least) $SU(5)$. 
Furthermore, if the standard model gauge group is extended to
$SU(5)$, this group cannot be broken by a field-theoretic Higgs mechanism, because the required Higgs scalar,
a $(24)$, cannot be massless in the heterotic string spectrum. A heterotic string spectrum contains either these
massless vector bosons, or fractionally charged states that forbid the former because they are non-local with
respect to them \cite{Schellekens:1989qb} (see also \cite{Wen:1985qj,Athanasiu:1988uj}).

These eight gauge groups are obtained as extensions of the affine Lie algebra $SU(3)_1\times SU(2)_1\times U_{30}$,
with a $U(1)$ normalization that gives rise to $SU(5)$-GUT type unification. In general, there is an additional $U(1)$ factor
that corresponds to a gauged $B-L$ symmetry in certain cases. In B-L lifted spectra this $U(1)$ is replaced by a non-abelian group.
In addition, the gauge group consists out of a $U(1)$ factor for each superconformal building block, which sometimes is extended
to a larger group, depending on the MIPF considered. There may also be extensions of the standard model gauge group outside
$SO(10)$, such as $E_6$ or trinification, $SU(3)^3$.
In standard Gepner models there is furthermore an unbroken $E_8$ factor, which
in lifted Gepner models is replaced by certain combinations of abelian and non-abelian groups. In scanning spectra we focus only
on the aforementioned  eight (extended) standard model groups. To identify their contribution in the figures we use the same
color-coding as in  \cite{GatoRivera:2010xn}, and the same names. These codes and names are shown in each figure.

\subsection{MIPF scanning}

Since it is essentially impossible to construct the complete set of distinct MIPFs, we use a random scan. This is done
by choosing 10.000 randomly chosen simple current subgroups ${\cal H}$ (see appendix A) generated by at most three simple currents.  
Furthermore, if the number of distinct torsion matrices $X$ is larger than 100, we make 100 random choices. The entire set
is guaranteed to be mirror symmetric, because for every given spectrum one  can always construct a mirror by multiplying the
MIPF with a simple current MIPF of $SO(10)_1$ that flips the chirality of all spinors. Note that this does not imply anything
about mirror symmetry of an underlying geometrical interpretation. It is a trivial operation on the spectrum that can however be used
to get some idea on the completeness of the scan. 
 
\subsection{Fractional Charges} 
 
Fractional charges can appear either in the form of chiral particles, or as vector-like particles  (where ``vector-like" is defined
with respect to the standard model gauge group) or only as massive particles, with masses of order the string scale. If a spectrum has
chiral fractionally charged particles, we reject it after counting it. In nearly all remaining cases the spectrum contains 
massless vector-like fractional charges (unless there is GUT unification). 
We regard such spectra as acceptable at this stage. Since no evidence for fractionally charged
particles exists in nature, with a limit of less than $10^{-20}$ in matter \cite{Perl:2009zz}, clearly these vector-like particles will have to acquire a mass.
Furthermore this will almost certainly have to be a huge (GUT scale or string scale) mass, since otherwise their abundance cannot
be credibly expected to be below the experimental limit. This can in principle happen if the vector-like particles couple to moduli that
get a vev. An analysis of existence of couplings is in principle doable in this class of models, although there may be some technical complications
in those cases where no fixed point resolution procedure is available at present (namely  the permutation orbifolds with
$k=2\ {\rm mod}\ 4$).  However,  this analysis is beyond the scope of this paper, and we treat spectra with 
vector-like fractional charges as valid candidates, for the time being. Just as in previous work \cite{GatoRivera:2010gv,GatoRivera:2010xn,GatoRivera:2010fi,FF2}, there are extremely rare
occurrences of spectra without any massless fractionally charged particles at all, but we only found examples with an
even number of families. Examples with three families were found in \cite{FF2} by scanning part of the
free-fermion landscape. In the context of orbifold models and Calabi-Yau compactifications, it is known that GUT breaking 
by modding out freely acting discrete symmetries leads to spectra without massless fractional charges (\cite{Witten:1985xc}; see
\cite{Blaszczyk:2009in} for a recent implementation of this idea in the context of the ``heterotic mini-landscape" \cite{Buchmuller:2005jr,Lebedev:2006kn,Lebedev:2008un}). While these models do fit the data on fractional charges, 
the question remains for which fundamental reasons such vacua are preferred over all others, especially if they are much rarer.

 In table (\ref{FreqTable}) we display how often four mutually exclusive types of spectra occur in the total sample,
 before distinguishing MIPFs. The types are: spectra with chiral, fractionally charged exotics, 
 chiral spectra with a GUT gauge group $SU(5)$ or $SO(10)$, non-chiral spectra (no exotics and no families),
 spectra with $N$ families and, massless $SU(5)$ vector bosons and vector-like fractionally charged exotics, and
 the same without massless fractionally charged exotics.  For comparison, we include some results
 based on the data of \cite{GatoRivera:2010gv,GatoRivera:2010xn,GatoRivera:2010fi}\rlap.\footnote{We thank the authors for making their
 raw data available to us.} All lines refer to Gepner models, except the one labelled ``free fermions".
 The results on free fermions are based on a special class that can be analysed with simple current in a 
 way analogous to Gepner models, as explained in \cite{GatoRivera:2010gv}. It does not represent the entire class of free fermionic models. For other work on this
 kind of construction, including three family models, see \cite{FF1,FF2} and references therein.
  
  \vskip 1.truecm
\begin{table}[h]
\begin{center}
~~~~~~~~~~~~~~~~\begin{tabular}{|l|c|c|c|c|c|cl} \hline
Type &  Chiral Exotics & GUT & Non-chiral & $N > 0$ fam. & No frac.\\ \hline \hline
Standard$^*$ & 37.4\% &32.7\% & 20.5 \% &  9.3\% & 0 \\
Standard, perm. & 29.7\% & 33.4 \% & 27.9 \% & 8.9\% & 0 \\ 
Free fermionic &  1.5\% &  2.9\% &   94.4\% &  1.1\% & 0.072\% \\ 
Lifted  & 28.3\% &  18.7\% &   51.9\% & 1.1\% & 0.00051\%\\ 
Lifted, perm. & 26.8\% & 8.9\% & 62.7 \% & 1.6\%  & 0.00078\% \\ 
$({\hbox{B-L}})_{\hbox{\footnotesize Type-A}}^*$ &  21.3\% &  28.0\% &  50.4 \% & 0.3\% & 0.00017\% \\ 
$({\hbox{B-L}})_{\hbox{\footnotesize Type-A}}$, perm. & 22.8\% & 8.1 \% & 69.1 \% & 0.03\% & 0 \\ 
$({\hbox{B-L}})_{\hbox{\footnotesize Type-B}}^*$ &  38.5\% &   8.7\% &  52.1\% &0.6\% & 0 \\ 
$({\hbox{B-L}})_{\hbox{\footnotesize Type-B}}$, perm. & 27.6\% & 7.3 \% & 65.0 \% & 0.1\% & 0 \\ 
\hline
 \end{tabular}
\caption{Relative frequency of various types of spectra. An asterisk indicates that exceptional minimal model MIPFs are included.}
\label{FreqTable}
\end{center}
\end{table}

In table (\ref{MIPFTable}) we specify the absolute number of distinct MIPFs (more precisely, distinct spectra,
based on the criteria spelled out in \cite{GatoRivera:2010gv,GatoRivera:2010xn,GatoRivera:2010fi}) with
non-chirally-exotic spectra. The column marked ``Total" specifies the total number of distinct spectra without
chiral exotics, the third column lists the number of distinct 3-family spectra and the last column the number of distinct
$N$-family spectra, in both cases regardless of gauge group and without modding out mirror symmetry.

 \vskip 1.truecm
\begin{table}[h]
\begin{center}
~~~~~~~~~~~~~~~~\begin{tabular}{|l|c|c|c|c|} \hline
Type &  Total & 3-family & $N$ family, $N>0$ \\ \hline \hline
Standard$^*$ & 927.100 & 1.220 & 369.089   \\
Standard, perm. & 245.821 & 0 & 64.085   \\ 
Free fermionic &  504.312 &   0 &  19.655   \\ 
Lifted  &     3.177.493 &  85.864 &  537.581 \\ 
Lifted, perm. & 601.452 & 4.702   &   54.926 \\ 
$({\hbox{B-L}})_{\hbox{\footnotesize Type-A}}^*$ &  445.978 &  24.203 & 155.425  \\ 
$({\hbox{B-L}})_{\hbox{\footnotesize Type-A}}$, perm. & 155.784 & 778 & 6.758   \\ 
$({\hbox{B-L}})_{\hbox{\footnotesize Type-B}}^*$ & 206949 &0 &   55917 \\ 
$({\hbox{B-L}})_{\hbox{\footnotesize Type-B}}$, perm. & 156.309 & 0 & 6.861  \\ 
\hline
 \end{tabular}
\caption{Total numbers of distinct spectra. }
\label{MIPFTable}
\end{center}
\end{table}


\subsection{Family number}
In this subsection we would like to say something about the distributions of the number of families emerging from the spectra of permuted Gepner models. The common features of all the different cases is that an even number of families is always more favourable than an odd one and these distributions decrease exponentially when the number of families increases.

Figure \ref{famplot_standard} shows the distribution of the number of families for permutation orbifolds of standard Gepner Models. 
All family numbers are even, as is the case for unpermuted Gepner models (we did not include exceptional MIPFs, which provides
the only way to get three families in standard Gepner models). The greatest common denominator $\Delta$ of the number of families
for a given tensor combination displays a similar behavior as  observed in \cite{GatoRivera:2010gv}. Two classes can be distinguished.
Either $\Delta=6$  (or in a few cases a multiple of $6$), or $\Delta=2$ (sometimes 4), but there are no MIPFs with a number of families 
that is a multiple of three. In other words, the set of family numbers occurring in these two classes have no overlap whatsoever. It follows
that in the second class there are no spectra with zero families.  An interesting example is\
$(3,\langle6,6\rangle,18)$. It has no spectra with chiral exotics, all spectra are chiral and have 4, 8, 16, 20, 28, 32, 40 or 56 families, of types
$SO(10)$, Pati-Salam, $SU(5)\times U(1)$ or $SM, Q=1/2$. If we compare this with the unpermuted Gepner model we find some
striking differences. In that case the same group types occur, but now there are spectra with chiral exotics, and the family quantization is
in units of 2, not 4. In  \cite{GatoRivera:2010gv} an intriguing observation was made regarding the occurrence of these two classes. 
The second class was found to occur if all values $k_i$ of the factors in the tensor product are divisible by 3. This observation
also holds for permutation orbifolds, if one uses the values of $k_i$ of the unpermuted theory.

In figure \ref{famplot_lift} we show the family distribution for lifted Gepner models. As expected, this distribution
looks a lot more favourable for three family models.
The number three appears with more or less the same order of magnitude as two or four. However, there are some clear peaks at even
family numbers, which were not visible in the analogous distribution for unpermuted Gepner models presented in  
\cite{GatoRivera:2010xn}. For this reason three families are still suppressed by a factor of 3 to 4 with respect to 2 or 4 families, which is
however a lot less than the two to three orders of magnitude observed in orientifold models \cite{Dijkstra:2004cc}. 

B-L lifts give similar results to those presented in  \cite{GatoRivera:2010fi}.
Figure \ref{famplot_liftA} contains the distribution of the number of families for permutation orbifolds of B-L lifted (lift A) Gepner models. Figure \ref{famplot_liftB} contains the same, but for the lift B. Here, odd numbers are completely absent. 
Note that certain group types (namely those without a ``$B-L$" type $U(1)$ factor) cannot occur in chiral spectra in these models, and
that in the  type that do occur the $U(1)$ is replaced by a non-abelian group.

\section{Conclusions}

In this paper we have considered $\mathbb{Z}_2$ permutation orbifolds of heterotic Gepner models. This should be viewed as an application of our previous paper \cite{Maio:2010eu} where $\mathbb{Z}_2$ permutations were studied for $N=2$ minimal models, which are the building blocks in Gepner construction.

Our main conclusion is that these new building blocks work as they should. They can be used on completely equal
footing with all other available ones, which are the $N=2$ minimal models and some free-fermionic building blocks.  We have
checked the combination with minimal models and found full agreement with previous results on permutation
orbifolds whenever they were available. The comparison did bring a few surprises, especially the fact that we were able
to get new spectra for single permutations,  where the old method of \cite{Fuchs:1991vu} gave nothing new.

We were able to go far beyond the old approach by finding many more $(2,2)$ models, as well as new $(0,2)$ models with
$SO(10)$ breaking. We combined all this with heterotic weight lifting and B-L lifting. The main conclusion is that in most respects
all observations 
concerning family number and fractional charges made for minimal models
continue to hold in this new class. Also in this case weight lifting greatly enhances  the
set of three family models in comparison to neighboring numbers. Although this appears to give some entirely new models
(Hodge number pairs that were not seen before), we found additional evidence for the observation of \cite{Fuchs:1991vu} that
many of these models look like additional rational points in existing moduli spaces. 


A next step in the exploration of these models is the study of couplings. An important problem to address is the possibility of lifting
the vector-like states, especially the omnipresent fractionally charged ones, by giving vevs to moduli. In some cases, namely
the permutation orbifolds with $k=2\mod4$, one may run into the problem of unresolved fixed points for certain couplings. This
problem did not arise here because we only studied spectra.
But there are plenty of cases where this restriction does not apply. The fact that all building blocks  with $k\not=\!2\mod4$ are
on equal footing with minimal models even with respect to fixed points
 should make it possible to develop an algorithm for determining couplings all at once for all cases. This is certainly a problem we
 intend to study in the near future.

Another application of our permutation orbifold CFTs
 could be to study the case of orientifold models. However, there we will face the same problem with fixed point resolution 
 at an earlier stage, since
 the fixed point resolution matrices enter in the formula for boundary  coefficients. We hope to address this issue in the future.

\section*{Acknowledgments}
This research is supported by the Dutch Foundation for Fundamental Research of Matter (FOM)
as part of the program STQG (String Theory and Quantum Gravity, FP 57). This work has been partially 
supported by funding of the Spanish Ministerio de Ciencia e Innovaci\'on, Research Project
FPA2008-02968, and by the Project CONSOLIDER-INGENIO 2010, Programme CPAN (CSD2007-00042).

\appendix

\section{Simple current invariants}
\label{Section: Simple current invariants}
Consider a simple current $J$ or order $N$, i.e. $J^N=1$. Define the \textit{monodromy parameter} $r$ as
\begin{equation}
h_J=\frac{r(N-1)}{2N}\quad {\rm mod}\,\,\mathbb{Z}\,.
\end{equation}
Also, define the \textit{monodromy charge} $Q_J(\Phi)$ of $\Phi$ w.r.t. $J$ as
\begin{equation}
Q_J(\Phi)=h_J+h_{\Phi}-h_{J\phi}\quad {\rm mod}\,\,\mathbb{Z}\,.
\end{equation}
The monodromy charge takes values $t/N$, with $t\in\mathbb{Z}$. The current $J$ organizes fields into orbits $(\Phi,J\Phi,\dots,J^d\Phi)$, where $d$ (not necessarily equal to $N$) is a divisor of $N$. On each orbit the monodromy charge is
\begin{equation}
Q_J(J^n\Phi)=\frac{t+rn}{N}\quad {\rm mod}\,\,\mathbb{Z}\,.
\end{equation}

If a simple current $J$ exists in a (rational) CFT, and if it satisfies the condition that $N$ times its conformal weight is an integer\rlap,\footnote{This is sometimes called the ``effective center condition", and eliminates for example the odd level simple currents of $A_1$, which have
order two, but quarter-integer spins.}
then it is known how to associate a modular invariant partition function to it. Suppose that the current $J$ has integer spin and order $N$. Then a MIPF is given by
\begin{equation}
\label{MIPF}
Z(\tau,\bar{\tau})=\sum_{k,\,l}\bar{\chi}_k(\bar{\tau})M_{kl}(J)\chi_l(\tau)\,.
\end{equation}
One way of expressing $M_{kl}(J)$ is \cite{Kreuzer:1993tf,Schellekens:1989wx}:
\begin{equation}
\label{M_kl}
M_{kl}(J)=\sum_{p=1}^N \delta(\Phi_k,J^p \Phi_l)\cdot\delta^1(\hat{Q}_J(\Phi_k)+\hat{Q}_J(\Phi_l))\,
\end{equation}
where $\delta^1(x)=1$ for $x=0$ mod $\mathbb{Z}$ and $\hat{Q}$ is defined on $J$-orbits as
\begin{equation}
\hat{Q}_J(J^n\Phi)=\frac{(t+rn)}{2N} \quad{\rm mod}\,\,\mathbb{Z}\,.
\end{equation}
Morally speaking, $\hat{Q}$ is half the monodromy charge. Formula (\ref{M_kl}) defines a modular invariant partition functions, since it commutes with the $S$ and $T$ modular matrices, as shown in \cite{Schellekens:1989dq}.
The set of all the simple currents forms an abelian group $\mathcal{G}$ under fusion multiplication. It is always a product of cyclic factors generated by a (conventionally chosen) complete subset of independent simple currents.

The foregoing associates a modular invariant partition function with a single simple current. One can construct even more of them
by multiplying the matrices $M$. The most general simple current MIPF associated with a given subgroup of $\mathcal{G}$ can be
obtained as  follows \cite{Kreuzer:1993tf,GatoRivera:2010gv}. Choose a subgroup of $\mathcal{G}$ denoted $\mathcal{H}$, such that
each element satisfies the effective center condition $N h_J \in \mathbb{Z}$.  Its generators are simple currents $J_s$, $s=1,\dots,k$ for some $k$. 
They have relative monodromies $Q_{J_s}(J_t)=R_{st}$. Take any matrix $X$ that satisfies the equation
\begin{equation}
X+X^T=R\,.
\end{equation}
The matrix $X$ (called the {\it torsion matrix}) determines the multiplicities $M_{ij}$ according to
\begin{equation}
\label{M=nr of sols}
M_{ij}(\mathcal{H},X)={\rm nr.\,\,of\,\,solutions}\,\,K\,\,{\rm to\,\,the\,\,conditions:}
\end{equation}
\begin{itemize}
\item $j=Ki$, $K\in\mathcal{H}$.
\item $Q_M(i)+X(M,K)=0$  mod 1 for all $M\in\mathcal{H}$\,.
\end{itemize}
Here $X(K,J)$ is defined in terms of the generating current $J_s$ as
\begin{equation}
 X(K,J)\equiv \sum_{s,t}n_s m_t X_{st}\,,
\end{equation}
with $K=(J_1)^{n_1}\dots(J_k)^{n_k}$ and $J=(J_1)^{m_1}\dots(J_k)^{m_k}$.


\subsection{A small theorem}
In this subsection we prove the following theorem.

\begin{theorem}
The following statements are true.\\
i) If a simple current $J$ is local w.r.t. any other current $K$, i.e. $Q_K(J)=0$ (mod $\mathbb{Z}$), then $M_{JJ}(K)\neq 0$.\\
ii) For a simple current $J$, which is local w.r.t. any other current $K$, $M_{J0}(K)=M_{0J}(K)$. 
In particular, if $M_{J0}(K)\neq 0$, then also $M_{0J}(K)\neq 0$.
\end{theorem}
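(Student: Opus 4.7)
The plan is to prove both parts by a direct computation using formula (\ref{M_kl}), relying on two elementary observations. First, the identity field has vanishing monodromy charge, $\hat{Q}_K(0)=0$. Second, from the definitions $Q_K(K^n\Phi)=(t+rn)/N$ and $\hat{Q}_K(K^n\Phi)=(t+rn)/(2N)$ one reads off $2\hat{Q}_K(\Phi)=Q_K(\Phi)$ as rationals, so in particular they agree mod $\mathbb{Z}$. The locality hypothesis $Q_K(J)=0$ mod $\mathbb{Z}$ therefore forces $\delta^1(2\hat{Q}_K(J))=1$.

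For part (i) I would set $k=l=J$ in (\ref{M_kl}), which yields
\begin{equation*}
M_{JJ}(K)=\sum_{p=1}^N \delta(\Phi_J,K^p\Phi_J)\,\delta^1\!\bigl(2\hat{Q}_K(J)\bigr).
\end{equation*}
By the observation above the second factor equals $1$ on every term. The term $p=N$ contributes, since $K^N$ is the identity current and so $\delta(\Phi_J,K^N\Phi_J)=\delta(\Phi_J,\Phi_J)=1$. Hence $M_{JJ}(K)\geq 1>0$, which is the claim.

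For part (ii) the $\delta^1$ factor in both $M_{0J}(K)$ and $M_{J0}(K)$ collapses, using $\hat{Q}_K(0)=0$, to the common value $\delta^1(\hat{Q}_K(J))$, so the two matrix elements share an identical overall factor whether it is $0$ or $1$. What remains is to compare the Kronecker-delta sums, which count the sets $A=\{p\in\{1,\dots,N\}:K^pJ=0\}$ and $B=\{p\in\{1,\dots,N\}:K^p=J\}$ respectively. These are in bijection via $p\mapsto N-p$ (understood mod $N$): indeed $K^pJ=0$ iff $K^p=J^{-1}$ iff $K^{N-p}=J$. Hence $|A|=|B|$ and $M_{0J}(K)=M_{J0}(K)$, from which the ``in particular'' statement is immediate.

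The argument is essentially mechanical; the only point requiring care is keeping straight the distinction between $Q_K$ and $\hat{Q}_K$, and recognising that the locality hypothesis is exactly the condition that trivialises $\delta^1(2\hat{Q}_K(J))$ in part (i), while for part (ii) the symmetry is built into the formula and survives independently of whether either side is nonzero.
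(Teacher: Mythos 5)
Your computation is internally correct, but it proves the statement only for the special class of invariants described by formula (\ref{M_kl}): a modular invariant generated by a \emph{single} current $K$, with no discrete torsion. Within that setting your steps are fine (and part (i) essentially reproduces an argument the authors themselves once had in mind): $2\hat{Q}_K(J)=Q_K(J)$ mod $\mathbb{Z}$, the $p=N$ term survives, hence $M_{JJ}(K)\geq 1$; and the bijection $p\mapsto N-p$ between $\{p:K^p=J^c\}$ and $\{p:K^p=J\}$ gives $M_{0J}=M_{J0}$. A telling feature, however, is that your part (ii) never uses the locality hypothesis at all — the left-right symmetry you exploit is automatic in (\ref{M_kl}). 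That is a sign you are proving a strictly weaker statement than the one the paper needs.

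The paper's own proof works instead with the general simple current invariants of formula (\ref{M=nr of sols}), built from a subgroup $\mathcal{H}$ (possibly with several generators) together with a torsion matrix $X$; these are the MIPFs actually scanned in the paper (up to three generating currents and random torsion choices), and the theorem is invoked for a current $J$ local with respect to \emph{all} currents used in the construction. In that setting the charge condition acquires the torsion term $X(M,K)$: for $M_{JJ}$ the unique solution is $K=0$ and the condition $Q_M(J)+X(M,0)=0$ holds because both terms vanish (here locality kills $Q_M(J)$), while for $M_{0J^c}$ and $M_{J0}$ the unique solution is $K=J^c$ and the respective conditions are $X(M,J^c)=0$ and $Q_M(J)+X(M,J^c)=0$; these coincide precisely because $Q_M(J)=0$, which is where the locality hypothesis genuinely enters part (ii). Your argument does not address multi-generator subgroups or the torsion matrix, so it does not establish the theorem in the generality in which it is stated and used; to close the gap you would need to redo the counting with (\ref{M=nr of sols}), as the paper does, rather than with (\ref{M_kl}).
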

\begin{proof}
For the proof we use the statement (\ref{M=nr of sols}).\\
Let us start with {\it i)} and consider $M_{JJ}(K,X)$. The first condition in (\ref{M=nr of sols}) has only one solution, namely $K=0$. The second condition reads
\begin{equation}
Q_M(J)+X(M,0)=0 \qquad \forall M
\end{equation}
and is always true, because the two terms vanish separately. This proves that $M_{JJ}(K)\neq 0$.

Point {\it ii)} goes as follows. Consider $M_{0J^c}(K,X)$. There is again only one solution to the first condition, namely $K=J^c$. The second condition reads
\begin{equation}
Q_M(0)+X(M,J^c)=0\,.
\end{equation}
The first term vanishes by hypothesis, while the second is either zero (in which case $M_{0J^c}(K,X)\neq0$) or non-zero (in which case $M_{0J^c}(K,X)=0$). \\
Similarly, look at $M_{J0}(K,X)$. There is again only one solution to the first condition, namely $K=J^c$. The second condition reads
\begin{equation}
Q_M(0)+X(M,J^c)=0\,.
\end{equation}
The first term vanishes by hypothesis, while the second is either zero or non-zero. In any case, the same condition holds for both $M_{0J^c}(K,X)$ and $M_{J0}(K,X)$. This implies that $M_{J0}(K,X)=M_{0J^c}(K,X)$ (note that these matrix elements can only be 0 or 1).
By closure of the algebra, and because $J^c$ is always a
power of $J$, we may replace $J^c$ by $J$ in this relation.
\end{proof}

Consider now the permutation orbifold. This theorem applies in particular to the un-orbifold current $J=(0,1)$ when coupled to any other current $K$, which is either a standard (diagonal) or an exceptional (off-diagonal) one. In fact, using the same procedure as we did in \cite{Maio:2009kb} to compute the simple current and fixed point structure of the permutation orbifold, one can show that
\begin{eqnarray}
N_{(J,\psi)\langle pq\rangle}^{\phantom{(J,\psi)\langle pq\rangle}\langle p'q'\rangle} &=&
N_{Jp}^{\phantom{Jp}p'}N_{Jq}^{\phantom{Jq}q'}+N_{Jp}^{\phantom{Jp}q'}N_{Jq}^{\phantom{Jq}p'}\,,\nonumber\\
N_{(J,\psi)(i,\chi)}^{\phantom{(J,\psi)(i,\chi)}(i',\chi')} &=&
\frac{1}{2} N_{Ji}^{\phantom{Ji}i'} (N_{Ji}^{\phantom{Ji}i'}+e^{i\pi(\psi+\chi-\chi')})\,.\nonumber
\end{eqnarray}
Hence, for the current $(J,\psi)=(0,1)$, we have
\begin{equation}
N_{(0,1)\langle pq\rangle}^{\phantom{(0,1)\langle pq\rangle}\langle p'q'\rangle} =
\delta_p^{p'}\delta_q^{q'}+\delta_p^{q'}\delta_q^{p'}=\delta_p^{p'}\delta_q^{q'}\,,\nonumber
\end{equation}
namely $\langle pq\rangle$ must be fixed by $(0,1)$ in order for this to be non-zero (recall that $p<q$ and $p'<q'$), and
\begin{equation}
N_{(0,1)(i,\chi)}^{\phantom{(0,1)(i,\chi)}(i',\chi')} =
\frac{1}{2} \delta_i^{i'}(\delta_i^{i'}-e^{i\pi(\chi-\chi')})\,\nonumber
\end{equation}
which is non-zero only if $i=i'$ and $\chi\neq\chi'$ (recall that we can think of $\chi$ as defined mod $2$). Equivalently, in the fusion language:
\begin{equation}
(0,1)\cdot \langle pq\rangle=\langle pq\rangle\quad,\qquad
(0,1)\cdot (i,\chi)=(i,\chi+1)\,.
\end{equation}
This implies that $(0,1)$ has zero monodromy charge w.r.t. any other current, since
\begin{eqnarray}
Q_{\langle pq\rangle}\big( (0,1) \big) &=&
h_{\langle pq\rangle}+h_{(0,1)}-h_{\langle pq\rangle}=0\quad{\rm mod}\,\,\mathbb{Z}\,,\nonumber\\
Q_{(i,\chi)}\big( (0,1) \big) &=&
h_{(i,\chi)}+h_{(0,1)}-h_{(i,\chi+1)}=0\quad{\rm mod}\,\,\mathbb{Z}\,.\nonumber
\end{eqnarray}

Now, the un-orbifold current $(0,1)$ has order two, hence $J^c=J$ and $M_{J0}(K,X)=M_{0J}(K,X)$. This also implies that its existence on left-moving sector is guaranteed by its existence on the right-moving sector (and vice-versa).

\subsection{Summary of results}

Here we present four tables summarizing the results on the number of families for the
standard, heterotic weight lifted, B-L lifted (lift A) and B-L lifted (lift B) cases. These tables contain
information about spectra in which the un-orbifold current is not allowed in the chiral algebra. This
means that these are genuine permutation orbifold spectra. By inspection, we do indeed find that these
spectra are usually different than those obtained in the unpermuted case.
In the columns we specify
respectively the tensor combination  the greatest common divisor $\Delta$ of the number of families for all MIPFs of
the tensor product and the maximal net number of families encountered. In the next column we indicate which of the seven
$SO(10)$ subgroups occur, with the labelling 
\begin{itemize}
\item{0: SM,  Q=1/6} \item{ 1: SM,  Q=1/3} \item{ 2: SM,  Q=1/2} \item{ 3: LR, Q=1/6} \item{4: $SU(5)\times U(1)$} \item{ 
5: LR, Q=1/3} \item{ 6: Pati-Salam. }
\end{itemize}
Since $SO(10)$ can always occur there is no need to indicate it. In \cite{GatoRivera:2010gv} a simple criterion was derived
to determine which subgroups can occur in each standard Gepner model. The allowed subgroups for permutation orbifolds
of Gepner models are a subset of these. In some cases, such as $(\langle 5,5\rangle,5,12)$ some of the subgroups cannot be realized.
In the column labelled ``Exotics" we indicate if, for a given tensor product, spectra with chiral exotics occur. Note that 
in most cases the absence of such spectra is a consequence of the fact that only GUT gauge groups occur. In the
next columns we list
the number of distinct  three family and in column 6
the number of distinct $N$-family $(N>0)$ spectra.
In these tables
only cases with $\Delta > 0$ are shown. If a permutation orbifold seems to be missing, than either it is a  permutation
for $k > 10$, or it is a purely odd tensor product for which all permutations are trivial, or it has only non-chiral spectra and
hence $\Delta=0$ and there are no chiral exotics.  
In column 1 of the second table, $\langle \cal{A},\cal{A}\rangle$  denotes the permutation
orbifold of CFT ${\cal A}$, a hat indicates the lifted CFT, and a tilde indicates the second lift of a CFT.
It turns out that in the only permutation orbifold with two distinct
lifts of the same factor, $(\hat 5,\langle5,5\rangle,12)$ and  $(\tilde 5,\langle5,5\rangle,12)$, $\Delta=0$ in both cases,
which is why a tilde never occurs in the tables. The last column indicates which percentage of the spectra has no mirror.
Since mirror symmetry is exact in the full set, this gives an indication of how close our random scan is to a full enumeration.

\LTcapwidth=14truecm
\begin{center}
\vskip .7truecm
\begin{longtable}{|c||c|c|c|c|c|c|c|}\caption{{\bf{Results for standard Gepner models}}}\\
\hline
 \multicolumn{1}{|c||}{model}
& \multicolumn{1}{c|}{$\Delta$}
& \multicolumn{1}{l|}{Max. }
& \multicolumn{1}{c|}{Groups }
& \multicolumn{1}{l|}{Exotics }
& \multicolumn{1}{c|}{3 family}
& \multicolumn{1}{c|}{$N$ fam.}  
& \multicolumn{1}{c|}{Missing} \\ 
\hline
\endfirsthead
\multicolumn{8}{c}%
{{\bfseries \tablename\ \thetable{} {\rm-- continued from previous page}}} \\
\hline 
 \multicolumn{1}{|c||}{model}
& \multicolumn{1}{c|}{$\Delta$}
& \multicolumn{1}{l|}{Max. }
& \multicolumn{1}{c|}{Groups }
& \multicolumn{1}{l|}{Exotics}
& \multicolumn{1}{c|}{3 family}
& \multicolumn{1}{c|}{$N$ fam.}  
& \multicolumn{1}{c|}{Missing} \\ 
\hline
\endhead
\hline \multicolumn{8}{|r|}{{Continued on next page}} \\ \hline
\endfoot
\hline \hline
\endlastfoot\hline
\label{StandardSummary}
$(1,1,1,1,1,\langle 4,4\rangle)$ & 6 & 84 & 3,5,6 & Yes & 0 & 342 &    6.14\% \\ 
$(1,1,1,1,\langle 10,10\rangle)$ & 6 & 48 & 3,5,6 & Yes & 0 & 124 &    4.84\% \\ 
$(1,1,1,1,\langle 2,2\rangle,4)$ & 6 & 48 & 3,5,6 & Yes & 0 & 75 &    6.67\% \\ 
$(1,1,1,\langle 4,4\rangle,4)$ & 6 & 84 & 3,5,6 & Yes & 0 & 2717 &   22.89\% \\ 
$(1,1,2,2,\langle 4,4\rangle)$ & 6 & 24 & 3,5,6 & Yes & 0 & 106 &    0.00\% \\ 
$(1,1,\langle 2,2\rangle,2,10)$ & 6 & 48 & 3,5,6 & Yes & 0 & 662 &    7.70\% \\ 
$(1,1,4,\langle 10,10\rangle)$ & 6 & 72 & 3,5,6 & Yes & 0 & 493 &    7.10\% \\ 
$(1,1,\langle 6,6\rangle,10)$ & 12 & 24 & 3,5,6 & Yes & 0 & 63 &    0.00\% \\ 
$(1,1,\langle 2,2\rangle,4,4)$ & 6 & 48 & 3,5,6 & Yes & 0 & 226 &    6.19\% \\ 
$(1,1,\langle 2,2\rangle,\langle 4,4\rangle)$ & 12 & 24 & 3,5,6 & Yes & 0 & 73 &    6.85\% \\ 
$(1,2,2,\langle 10,10\rangle)$ & 6 & 60 & 3,5,6 & Yes & 0 & 191 &    4.71\% \\ 
$(1,\langle 2,2\rangle,2,2,4)$ & 12 & 60 & 3,5,6 & Yes & 0 & 363 &    3.31\% \\ 
$(1,2,4,\langle 6,6\rangle)$ & 12 & 48 & 3,5,6 & Yes & 0 & 57 &    3.51\% \\ 
$(1,2,\langle 4,4\rangle,10)$ & 6 & 60 & 3,5,6 & Yes & 0 & 1605 &   14.08\% \\ 
$(1,2,\langle 3,3\rangle,58)$ & 6 & 24 & 0,1,2,3,4,5,6 & Yes & 0 & 102 &    0.00\% \\ 
$(1,\langle 4,4\rangle,4,4)$ & 6 & 84 & 3,5,6 & Yes & 0 & 5605 &    6.57\% \\ 
$(1,\langle 2,2\rangle,10,10)$ & 6 & 84 & 3,5,6 & Yes & 0 & 989 &    6.47\% \\ 
$(1,\langle 3,3\rangle,4,8)$ & 6 & 36 & 0,1,2,3,4,5,6 & Yes & 0 & 37 &    0.00\% \\ 
$(1,\langle 2,2\rangle,6,22)$ & 6 & 60 & 3,5,6 & Yes & 0 & 985 &    3.25\% \\ 
$(1,\langle 2,2\rangle,7,16)$ & 12 & 48 & 3,5,6 & Yes & 0 & 41 &    0.00\% \\ 
$(1,\langle 2,2\rangle,\langle 2,2\rangle,4)$ & 12 & 60 & 3,5,6 & Yes & 0 & 165 &    0.61\% \\ 
$(\langle 2,2\rangle,2,2,2,2)$ & 6 & 90 & 6 & Yes & 0 & 1849 &    5.19\% \\ 
$(2,2,2,\langle 6,6\rangle)$ & 12 & 72 & 6 & Yes & 0 & 245 &    0.00\% \\ 
$(2,2,\langle 4,4\rangle,4)$ & 6 & 48 & 3,5,6 & Yes & 0 & 250 &    0.00\% \\ 
$(2,2,\langle 3,3\rangle,8)$ & 6 & 36 & 2,4,6 & Yes & 0 & 55 &    0.00\% \\ 
$(2,2,\langle 2,2\rangle,\langle 2,2\rangle)$ & 6 & 90 & 6 & Yes & 0 & 1580 &    1.58\% \\ 
$(2,\langle 10,10\rangle,10)$ & 6 & 102 & 3,5,6 & Yes & 0 & 328 &    0.00\% \\ 
$(2,\langle 8,8\rangle,18)$ & 6 & 72 & 2,4,6 & Yes & 0 & 316 &    0.00\% \\ 
$(\langle 2,2\rangle,2,3,18)$ & 6 & 60 & 2,4,6 & Yes & 0 & 780 &    4.36\% \\ 
$(2,\langle 7,7\rangle,34)$ & 12 & 48 & 3,5,6 & Yes & 0 & 9 &    0.00\% \\ 
$(\langle 2,2\rangle,2,4,10)$ & 6 & 66 & 3,5,6 & Yes & 0 & 1550 &    3.81\% \\ 
$(\langle 2,2\rangle,2,6,6)$ & 6 & 84 & 6 & Yes & 0 & 1735 &    3.80\% \\ 
$(2,\langle 2,2\rangle,\langle 6,6\rangle)$ & 12 & 72 & $SO(10)$ only & No & 0 & 219 &    0.00\% \\ 
$(3,\langle 6,6\rangle,18)$ & 4 & 56 & 2,4,6 & No & 0 & 232 &    0.00\% \\ 
$(3,\langle 5,5\rangle,68)$ & 24 & 24 & 4 & No & 0 & 18 &    0.00\% \\ 
$(3,\langle 8,8\rangle,8)$ & 6 & 96 & 2,4,6 & Yes & 0 & 1909 &    1.41\% \\ 
$(3,\langle 3,3\rangle,\langle 3,3\rangle)$ & 2 & 56 & 4 & No & 0 & 126 &    0.00\% \\ 
$(4,4,\langle 10,10\rangle)$ & 6 & 90 & 5 & Yes & 0 & 188 &    0.00\% \\ 
$(4,\langle 6,6\rangle,10)$ & 12 & 48 & 5 & Yes & 0 & 70 &    0.00\% \\ 
$(4,\langle 5,5\rangle,19)$ & 12 & 24 & 5 & Yes & 0 & 6 &    0.00\% \\ 
$(4,\langle 7,7\rangle,7)$ & 12 & 60 & 5 & Yes & 0 & 11 &    0.00\% \\ 
$(\langle 5,5\rangle,5,12)$ & 6 & 78 & $SO(10)$ only & No & 0 & 44 &    0.00\% \\ 
$(\langle 6,6\rangle,6,6)$ & 2 & 104 & 6 & Yes & 0 & 1230 &    0.00\% \\ 
$(\langle 4,4\rangle,10,10)$ & 6 & 96 & 3,5,6 & Yes & 0 & 693 &    0.72\% \\ 
$(\langle 3,3\rangle,10,58)$ & 6 & 60 & 0,1,2,3,4,5,6 & Yes & 0 & 97 &    0.00\% \\ 
$(\langle 3,3\rangle,12,33)$ & 2 & 20 & 4 & No & 0 & 30 &    0.00\% \\ 
$(\langle 3,3\rangle,13,28)$ & 6 & 84 & 1,4,5 & Yes & 0 & 587 &    0.00\% \\ 
$(\langle 3,3\rangle,18,18)$ & 2 & 116 & 2,4,6 & Yes & 0 & 681 &    0.00\% \\ 
$(\langle 2,2\rangle,3,3,8)$ & 6 & 48 & 2,4,6 & Yes & 0 & 332 &    3.61\% \\ 
$(\langle 2,2\rangle,4,4,4)$ & 6 & 54 & 5 & Yes & 0 & 75 &    0.00\% \\ 
$(\langle 4,4\rangle,5,40)$ & 6 & 48 & 3,5,6 & Yes & 0 & 98 &    0.00\% \\ 
$(\langle 4,4\rangle,6,22)$ & 6 & 60 & 3,5,6 & Yes & 0 & 440 &    0.00\% \\ 
$(\langle 4,4\rangle,7,16)$ & 6 & 72 & 3,5,6 & Yes & 0 & 271 &    0.00\% \\ 
$(\langle 4,4\rangle,8,13)$ & 6 & 48 & 0,1,2,3,4,5,6 & Yes & 0 & 180 &    0.00\% \\ 
$(\langle 3,3\rangle,9,108)$ & 2 & 28 & 4 & No & 0 & 30 &    0.00\% \\ 
$(\langle 6,6\rangle,\langle 6,6\rangle)$ & 4 & 80 & $SO(10)$ only & No & 0 & 152 &    0.00\% \\ 
$(\langle 2,2\rangle,\langle 4,4\rangle,4)$ & 6 & 48 & 5 & Yes & 0 & 103 &    0.00\% \\ 
$(\langle 2,2\rangle,\langle 3,3\rangle,8)$ & 6 & 36 & 4 & No & 0 & 37 &    0.00\% \\ 
$(\langle 2,2\rangle,\langle 2,2\rangle,\langle 2,2\rangle)$ & 6 & 90 & $SO(10)$ only  & No & 0 & 224 &    1.34\% \\ 
$(1,\langle 2,2\rangle,\langle 10,10\rangle)$ & 12 & 60 & 3,5,6 & Yes & 0 & 155 &    0.00\% \\ 
$(\langle 4,4\rangle,\langle 10,10\rangle)$ & 6 & 72 & 5 & Yes & 0 & 142 &    0.00\% \\ 
$(1,\langle 4,4\rangle,\langle 4,4\rangle)$ & 6 & 84 & 3,5,6 & Yes & 0 & 848 &    0.83\% \\ 
\end{longtable}
\end{center}

\LTcapwidth=14truecm
\begin{center}
\vskip .7truecm
\begin{longtable}{|c||c|c|c|c|c|c|c|}\caption{{\bf{Results for lifted Gepner models}}}\\
\hline
 \multicolumn{1}{|c||}{model}
& \multicolumn{1}{c|}{$\Delta$}
& \multicolumn{1}{l|}{Max. }
& \multicolumn{1}{c|}{Groups }
& \multicolumn{1}{l|}{Exotics }
& \multicolumn{1}{c|}{3 family}
& \multicolumn{1}{c|}{$N$ fam.}  
& \multicolumn{1}{c|}{Missing} \\ 
\hline
\endfirsthead
\multicolumn{8}{c}%
{{\bfseries \tablename\ \thetable{} {\rm-- continued from previous page}}} \\
\hline 
 \multicolumn{1}{|c||}{model}
& \multicolumn{1}{c|}{$\Delta$}
& \multicolumn{1}{l|}{Max. }
& \multicolumn{1}{c|}{Groups }
& \multicolumn{1}{l|}{Exotics}
& \multicolumn{1}{c|}{3 family}
& \multicolumn{1}{c|}{$N$ fam.}  
& \multicolumn{1}{c|}{Missing} \\ 
\hline
\endhead
\hline \multicolumn{8}{|r|}{{Continued on next page}} \\ \hline
\endfoot
\hline \hline
\endlastfoot\hline
\label{HWLSummary}
$(\widehat{1},1,1,1,1,\langle 4,4\rangle)$ & 3 & 33 & 3,5,6 & Yes & 45 & 205 &   16.10\% \\ 
$(\widehat{1},1,1,1,\langle 10,10\rangle)$ & 3 & 24 & 3,5,6 & Yes & 0 & 39 &    2.56\% \\ 
$(\widehat{1},1,1,1,\langle 2,2\rangle,4)$ & 3 & 18 & 3,5,6 & Yes & 16 & 50 &   14.00\% \\ 
$(\widehat{1},1,1,\langle 4,4\rangle,4)$ & 3 & 33 & 3,5,6 & Yes & 549 & 1016 &   28.54\% \\ 
$(\widehat{1},1,2,2,\langle 4,4\rangle)$ & 3 & 12 & 3,5,6 & Yes & 17 & 60 &    0.00\% \\ 
$(\widehat{1},1,\langle 2,2\rangle,2,10)$ & 3 & 24 & 3,5,6 & Yes & 123 & 283 &    7.42\% \\ 
$(\widehat{1},1,4,\langle 10,10\rangle)$ & 3 & 24 & 3,5,6 & Yes & 33 & 206 &    7.77\% \\ 
$(\widehat{1},1,\langle 6,6\rangle,10)$ & 6 & 6 & 3,5,6 & Yes & 0 & 15 &    0.00\% \\ 
$(\widehat{1},1,\langle 2,2\rangle,4,4)$ & 3 & 24 & 3,5,6 & Yes & 34 & 237 &    4.64\% \\ 
$(\widehat{1},1,\langle 2,2\rangle,\langle 4,4\rangle)$ & 6 & 12 & 3,5,6 & Yes & 0 & 38 &    0.00\% \\ 
$(\widehat{1},2,2,\langle 10,10\rangle)$ & 12 & 24 & 3,5,6 & Yes & 0 & 18 &    0.00\% \\ 
$(\widehat{1},\langle 2,2\rangle,2,2,4)$ & 6 & 24 & 3,5,6 & Yes & 0 & 71 &    7.04\% \\ 
$(\widehat{1},2,\langle 3,3\rangle,58)$ & 1 & 8 & 0,1,2,3,4,5,6 & Yes & 2 & 40 &    0.00\% \\ 
$(\widehat{1},\langle 2,2\rangle,10,10)$ & 6 & 24 & 3,5,6 & Yes & 0 & 105 &    5.71\% \\ 
$(\widehat{1},\langle 3,3\rangle,4,8)$ & 2 & 8 & 0,1,2,3,4,5,6 & Yes & 0 & 23 &    0.00\% \\ 
$(\widehat{1},\langle 2,2\rangle,6,22)$ & 3 & 24 & 3,5,6 & Yes & 58 & 281 &    5.69\% \\ 
$(\widehat{1},\langle 2,2\rangle,7,16)$ & 6 & 12 & 3,5,6 & Yes & 0 & 13 &    0.00\% \\ 
$(\widehat{2},\langle 2,2\rangle,2,2,2)$ & 1 & 36 & 6 & Yes & 587 & 10481 &    6.91\% \\ 
$(\widehat{2},2,2,\langle 6,6\rangle)$ & 2 & 36 & 6 & Yes & 0 & 595 &    0.17\% \\ 
$(\widehat{2},2,\langle 3,3\rangle,8)$ & 1 & 10 & 2,4,6 & Yes & 3 & 51 &    0.00\% \\ 
$(\widehat{2},2,\langle 2,2\rangle,\langle 2,2\rangle)$ & 2 & 24 & 6 & Yes & 0 & 807 &    1.73\% \\ 
$(\widehat{2},\langle 10,10\rangle,10)$ & 4 & 8 & 3,5,6 & Yes & 0 & 24 &    0.00\% \\ 
$(\widehat{2},\langle 8,8\rangle,18)$ & 1 & 12 & 2,4,6 & Yes & 6 & 85 &    0.00\% \\ 
$(\widehat{2},\langle 2,2\rangle,3,18)$ & 1 & 24 & 2,4,6 & Yes & 26 & 225 &    4.00\% \\ 
$(\widehat{2},\langle 2,2\rangle,4,10)$ & 2 & 24 & 3,5,6 & Yes & 0 & 89 &    3.37\% \\ 
$(\widehat{2},\langle 2,2\rangle,6,6)$ & 1 & 24 & 6 & Yes & 9 & 305 &    1.97\% \\ 
$(\widehat{3},\langle 6,6\rangle,18)$ & 2 & 20 & 2,4,6 & No & 0 & 85 &    0.00\% \\ 
$(\widehat{3},\langle 5,5\rangle,68)$ & 12 & 12 & 4 & No & 0 & 4 &    0.00\% \\ 
$(\widehat{3},\langle 8,8\rangle,8)$ & 1 & 48 & 2,4,6 & Yes & 146 & 1709 &    0.06\% \\ 
$(\widehat{3},\langle 3,3\rangle,\langle 3,3\rangle)$ & 1 & 28 & 4 & No & 11 & 80 &    0.00\% \\ 
$(\widehat{4},4,\langle 10,10\rangle)$ & 2 & 16 & 5 & Yes & 0 & 47 &    0.00\% \\ 
$(\widehat{4},\langle 7,7\rangle,7)$ & 1 & 3 & 5 & Yes & 2 & 6 &    0.00\% \\ 
$(\widehat{6},\langle 6,6\rangle,6)$ & 1 & 8 & 6 & Yes & 7 & 85 &    0.00\% \\ 
$(\langle 3,3\rangle,10,\widehat{58})$ & 1 & 6 & 0,1,2,3,4,5,6 & Yes & 2 & 11 &    0.00\% \\ 
$(\langle 3,3\rangle,\widehat{12},33)$ & 2 & 6 & 4 & No & 0 & 5 &    0.00\% \\ 
$(\langle 3,3\rangle,\widehat{13},28)$ & 1 & 30 & 1,4,5 & Yes & 29 & 170 &    0.00\% \\ 
$(\langle 2,2\rangle,\widehat{3},3,8)$ & 1 & 24 & 2,4,6 & Yes & 14 & 479 &    4.38\% \\ 
$(\langle 2,2\rangle,\widehat{4},4,4)$ & 2 & 24 & 5 & Yes & 0 & 111 &    0.90\% \\ 
$(\langle 4,4\rangle,\widehat{6},22)$ & 8 & 8 & 3,5,6 & Yes & 0 & 5 &    0.00\% \\ 
$(\langle 4,4\rangle,\widehat{8},13)$ & 4 & 16 & 0,1,2,3,4,5,6 & Yes & 0 & 17 &    0.00\% \\ 
$(\langle 3,3\rangle,\widehat{9},108)$ & 2 & 6 & 4 & No & 0 & 6 &    0.00\% \\ 
$(\widehat{4},\langle 2,2\rangle,\langle 4,4\rangle)$ & 4 & 20 & 5 & Yes & 0 & 53 &    0.00\% \\ 
$(\langle 2,2\rangle,\langle 3,3\rangle,\widehat{8})$ & 2 & 6 & 4 & No & 0 & 11 &    0.00\% \\ 
$(1,1,1,\widehat{4},\langle 4,4\rangle)$ & 1 & 24 & 3,5,6 & Yes & 78 & 859 &   20.84\% \\ 
$(1,1,\widehat{2},2,\langle 4,4\rangle)$ & 1 & 6 & 3,5,6 & Yes & 0 & 75 &    4.00\% \\ 
$(1,1,\widehat{2},\langle 2,2\rangle,10)$ & 1 & 24 & 3,5,6 & Yes & 20 & 323 &    8.05\% \\ 
$(1,1,\widehat{4},\langle 10,10\rangle)$ & 2 & 16 & 3,5,6 & Yes & 0 & 191 &    4.71\% \\ 
$(1,1,\langle 2,2\rangle,\widehat{4},4)$ & 2 & 32 & 3,5,6 & Yes & 0 & 297 &    5.05\% \\ 
$(1,\widehat{2},2,\langle 10,10\rangle)$ & 1 & 16 & 3,5,6 & Yes & 28 & 262 &    0.00\% \\ 
$(1,\langle 2,2\rangle,2,2,\widehat{4})$ & 4 & 32 & 3,5,6 & Yes & 0 & 118 &    8.47\% \\ 
$(1,\widehat{2},4,\langle 6,6\rangle)$ & 2 & 24 & 3,5,6 & Yes & 0 & 77 &    0.00\% \\ 
$(1,\widehat{2},\langle 4,4\rangle,10)$ & 1 & 16 & 3,5,6 & Yes & 147 & 1160 &    8.71\% \\ 
$(1,\widehat{2},\langle 3,3\rangle,58)$ & 1 & 8 & 0,1,2,3,4,5,6 & Yes & 3 & 56 &    0.00\% \\ 
$(1,\widehat{4},\langle 4,4\rangle,4)$ & 1 & 24 & 3,5,6 & Yes & 425 & 3219 &    6.28\% \\ 
$(1,\langle 3,3\rangle,\widehat{4},8)$ & 2 & 10 & 0,1,2,3,4,5,6 & Yes & 0 & 27 &    0.00\% \\ 
$(1,\langle 2,2\rangle,\widehat{6},22)$ & 1 & 48 & 3,5,6 & Yes & 46 & 645 &    3.41\% \\ 
$(2,2,\widehat{4},\langle 4,4\rangle)$ & 4 & 20 & 3,5,6 & Yes & 0 & 93 &    0.00\% \\ 
$(2,2,\langle 3,3\rangle,\widehat{8})$ & 2 & 8 & 2,4,6 & Yes & 0 & 29 &    0.00\% \\ 
$(\langle 2,2\rangle,2,\widehat{3},18)$ & 2 & 30 & 2,4,6 & Yes & 0 & 380 &    2.63\% \\ 
$(\langle 2,2\rangle,2,\widehat{4},10)$ & 4 & 24 & 3,5,6 & Yes & 0 & 107 &    0.93\% \\ 
$(\langle 2,2\rangle,2,\widehat{6},6)$ & 2 & 48 & 6 & Yes & 0 & 477 &    2.73\% \\ 
$(3,\widehat{8},\langle 8,8\rangle)$ & 1 & 32 & 2,4,6 & Yes & 24 & 480 &    0.00\% \\ 
$(\langle 5,5\rangle,5,\widehat{12})$ & 1 & 6 & $SO(10)$ only & No & 2 & 8 &    0.00\% \\ 
$(\langle 2,2\rangle,3,3,\widehat{8})$ & 1 & 18 & 2,4,6 & Yes & 6 & 116 &    0.00\% \\ 
$(\langle 4,4\rangle,8,\widehat{13})$ & 2 & 14 & 0,1,2,3,4,5,6 & Yes & 0 & 20 &    0.00\% \\ 
$(1,\widehat{2},\langle 2,2\rangle,2,4)$ & 2 & 24 & 3,5,6 & Yes & 0 & 1092 &    4.85\% \\ 
$(1,2,\langle 3,3\rangle,\widehat{58})$ & 2 & 12 & 0,1,2,3,4,5,6 & Yes & 0 & 11 &    0.00\% \\ 
$(1,\langle 3,3\rangle,4,\widehat{8})$ & 2 & 6 & 0,1,2,3,4,5,6 & Yes & 0 & 10 &    0.00\% \\ 
\end{longtable}
\end{center}

\LTcapwidth=14truecm
\begin{center}
\vskip .7truecm
\begin{longtable}{|c||c|c|c|c|c|c|c|}\caption{{\bf{Results for B-L lifted (lift A) Gepner models}}}\\
\hline
 \multicolumn{1}{|c||}{model}
& \multicolumn{1}{c|}{$\Delta$}
& \multicolumn{1}{l|}{Max. }
& \multicolumn{1}{c|}{Groups }
& \multicolumn{1}{l|}{Exotics }
& \multicolumn{1}{c|}{3 family}
& \multicolumn{1}{c|}{$N$ fam.}  
& \multicolumn{1}{c|}{Missing} \\ 
\hline
\endfirsthead
\multicolumn{8}{c}%
{{\bfseries \tablename\ \thetable{} {\rm-- continued from previous page}}} \\
\hline 
 \multicolumn{1}{|c||}{model}
& \multicolumn{1}{c|}{$\Delta$}
& \multicolumn{1}{l|}{Max. }
& \multicolumn{1}{c|}{Groups }
& \multicolumn{1}{l|}{Exotics}
& \multicolumn{1}{c|}{3 family}
& \multicolumn{1}{c|}{$N$ fam.}  
& \multicolumn{1}{c|}{Missing} \\ 
\hline
\endhead
\hline \multicolumn{8}{|r|}{{Continued on next page}} \\ \hline
\endfoot
\hline \hline
\endlastfoot\hline
\label{BLASummary}
$(1,2,\langle 3,3\rangle,58)$ & 1 & 6 & 0,1,2,3,4,5,6 & Yes & 4 & 33 &    0.00\% \\ 
$(1,\langle 3,3\rangle,4,8)$ & 2 & 6 & 0,1,2,3,4,5,6 & Yes & 0 & 12 &    0.00\% \\ 
$(2,2,\langle 3,3\rangle,8)$ & 2 & 8 & 2,4,6 & Yes & 0 & 25 &    0.00\% \\ 
$(2,\langle 8,8\rangle,18)$ & 1 & 12 & 2,4,6 & Yes & 14 & 90 &    0.00\% \\ 
$(\langle 2,2\rangle,2,3,18)$ & 2 & 18 & 2,4,6 & Yes & 0 & 364 &    3.85\% \\ 
$(3,\langle 6,6\rangle,18)$ & 2 & 14 & 2,4,6 & No & 0 & 84 &    0.00\% \\ 
$(3,\langle 5,5\rangle,68)$ & 6 & 6 & 4 & No & 0 & 12 &    0.00\% \\ 
$(3,\langle 8,8\rangle,8)$ & 1 & 30 & 2,4,6 & Yes & 238 & 1799 &    0.11\% \\ 
$(3,\langle 3,3\rangle,\langle 3,3\rangle)$ & 1 & 18 & 4 & No & 15 & 84 &    0.00\% \\ 
$(\langle 3,3\rangle,10,58)$ & 1 & 10 & 0,1,2,3,4,5,6 & Yes & 1 & 19 &    0.00\% \\ 
$(\langle 3,3\rangle,12,33)$ & 2 & 4 & 4 & No & 0 & 6 &    0.00\% \\ 
$(\langle 3,3\rangle,13,28)$ & 1 & 9 & 1,4,5 & Yes & 57 & 346 &    0.00\% \\ 
$(\langle 3,3\rangle,18,18)$ & 1 & 14 & 2,4,6 & Yes & 30 & 200 &    0.00\% \\ 
$(\langle 2,2\rangle,3,3,8)$ & 1 & 12 & 2,4,6 & Yes & 30 & 246 &    0.00\% \\ 
$(\langle 4,4\rangle,8,13)$ & 2 & 8 & 0,1,2,3,4,5,6 & Yes & 0 & 49 &    0.00\% \\ 
$(\langle 3,3\rangle,9,108)$ & 2 & 4 & 4 & No & 0 & 6 &    0.00\% \\ 
$(\langle 2,2\rangle,\langle 3,3\rangle,8)$ & 2 & 6 & 4 & No & 0 & 12 &    0.00\% \\ 
\end{longtable}
\end{center}




\LTcapwidth=14truecm
\begin{center}
\vskip .7truecm
\begin{longtable}{|c||c|c|c|c|c|c|c|}\caption{{\bf{Results for B-L lifted Gepner (lift B) models}}}\\
\hline
 \multicolumn{1}{|c||}{model}
& \multicolumn{1}{c|}{$\Delta$}
& \multicolumn{1}{l|}{Max. }
& \multicolumn{1}{c|}{Groups }
& \multicolumn{1}{l|}{Exotics }
& \multicolumn{1}{c|}{3 family}
& \multicolumn{1}{c|}{$N$ fam.}  
& \multicolumn{1}{c|}{Missing} \\ 
\hline
\endfirsthead
\multicolumn{8}{c}%
{{\bfseries \tablename\ \thetable{} {\rm-- continued from previous page}}} \\
\hline 
 \multicolumn{1}{|c||}{model}
& \multicolumn{1}{c|}{$\Delta$}
& \multicolumn{1}{l|}{Max. }
& \multicolumn{1}{c|}{Groups }
& \multicolumn{1}{l|}{Exotics}
& \multicolumn{1}{c|}{3 family}
& \multicolumn{1}{c|}{$N$ fam.}  
& \multicolumn{1}{c|}{Missing} \\ 
\hline
\endhead
\hline \multicolumn{8}{|r|}{{Continued on next page}} \\ \hline
\endfoot
\hline \hline
\endlastfoot\hline
\label{BLBSummary}
$(1,2,\langle 3,3\rangle,58)$ & 2 & 10 & 0,1,2,3,4,5,6 & Yes & 0 & 32 &    0.00\% \\ 
$(1,\langle 3,3\rangle,4,8)$ & 2 & 10 & 0,1,2,3,4,5,6 & Yes & 0 & 10 &    0.00\% \\ 
$(2,2,\langle 3,3\rangle,8)$ & 2 & 14 & 2,4,6 & Yes & 0 & 34 &    0.00\% \\ 
$(2,\langle 8,8\rangle,18)$ & 2 & 16 & 2,4,6 & Yes & 0 & 108 &    0.00\% \\ 
$(\langle 2,2\rangle,2,3,18)$ & 2 & 36 & 2,4,6 & Yes & 0 & 476 &    3.99\% \\ 
$(3,\langle 6,6\rangle,18)$ & 4 & 28 & 2,4,6 & No & 0 & 82 &    0.00\% \\ 
$(3,\langle 5,5\rangle,68)$ & 8 & 16 & 4 & No & 0 & 12 &    0.00\% \\ 
$(3,\langle 8,8\rangle,8)$ & 2 & 56 & 2,4,6 & Yes & 0 & 1781 &    0.00\% \\ 
$(3,\langle 3,3\rangle,\langle 3,3\rangle)$ & 2 & 32 & 4 & No & 0 & 81 &    0.00\% \\ 
$(\langle 3,3\rangle,10,58)$ & 2 & 18 & 0,1,2,3,4,5,6 & Yes & 0 & 18 &    0.00\% \\ 
$(\langle 3,3\rangle,12,33)$ & 2 & 8 & 4 & No & 0 & 6 &    0.00\% \\ 
$(\langle 3,3\rangle,13,28)$ & 2 & 18 & 1,4,5 & Yes & 0 & 322 &    0.00\% \\ 
$(\langle 3,3\rangle,18,18)$ & 2 & 26 & 2,4,6 & Yes & 0 & 191 &    0.00\% \\ 
$(\langle 2,2\rangle,3,3,8)$ & 2 & 24 & 2,4,6 & Yes & 0 & 226 &    0.00\% \\ 
$(\langle 4,4\rangle,8,13)$ & 4 & 16 & 0,1,2,3,4,5,6 & Yes & 0 & 45 &    0.00\% \\ 
$(\langle 3,3\rangle,9,108)$ & 2 & 10 & 4 & No & 0 & 6 &    0.00\% \\ 
$(\langle 2,2\rangle,\langle 3,3\rangle,8)$ & 2 & 10 & 4 & No & 0 & 10 &    0.00\% \\ 
\end{longtable}
\end{center}

\begin{figure}[p]
\begin{center}
\includegraphics[width=17cm]{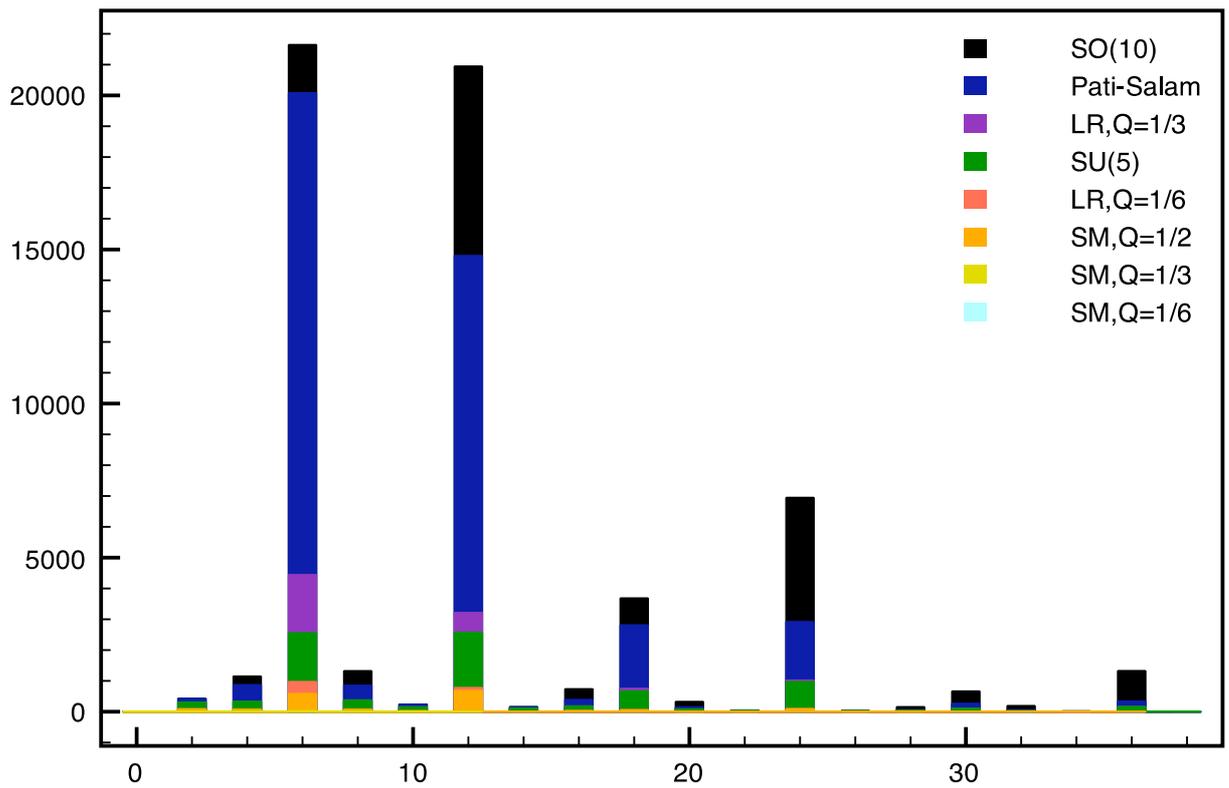}
\caption{\small Distribution of the number of families for permutation orbifolds of standard Gepner Models.}
\label{famplot_standard}
\end{center}
\end{figure}

\begin{figure}[p]
\begin{center}
\includegraphics[width=17cm]{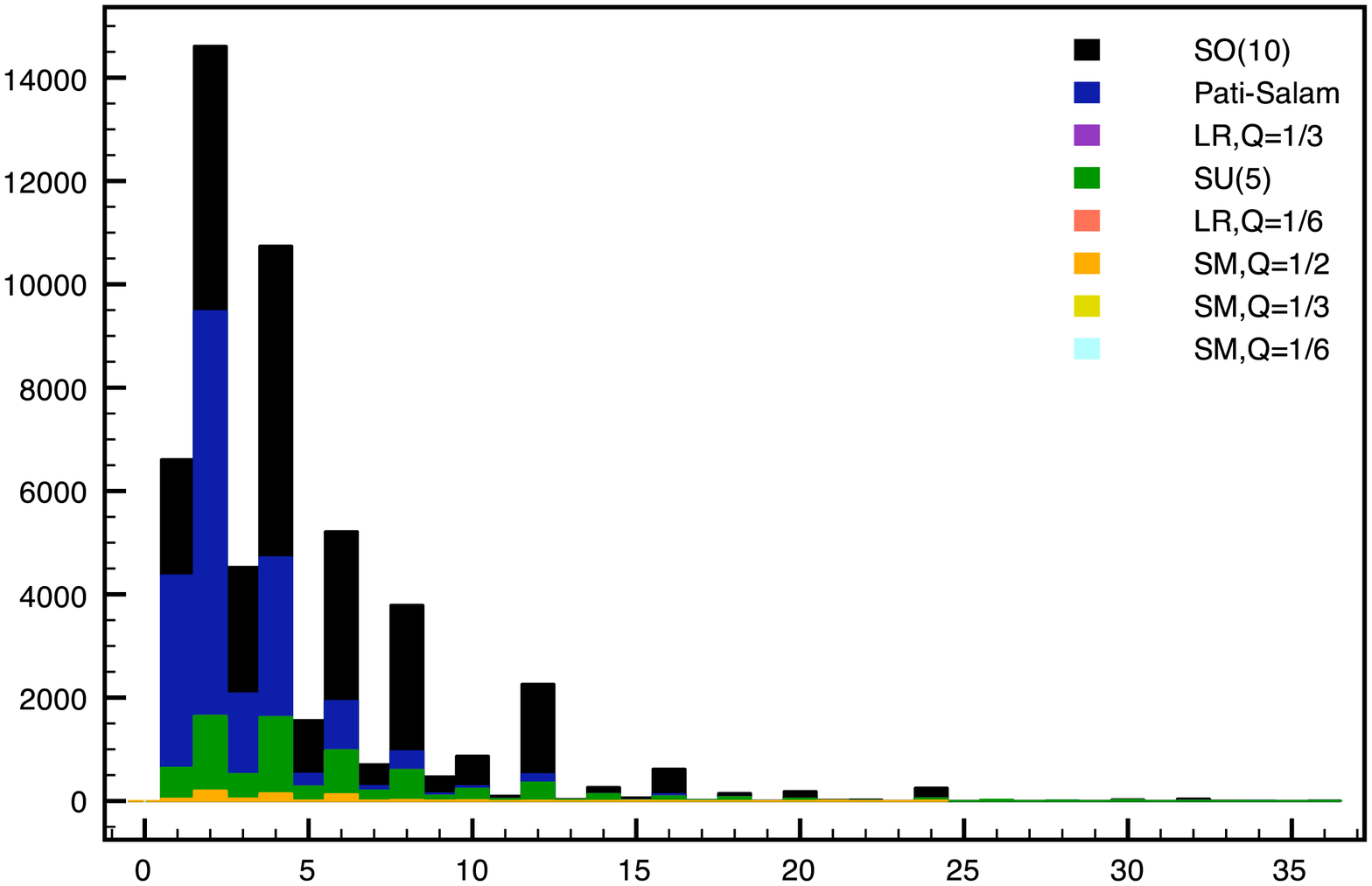}
\caption{\small Distribution of the number of families for permutation orbifolds of lifted Gepner Models.}
\label{famplot_lift}
\end{center}
\end{figure}

\begin{figure}[p]
\begin{center}
\includegraphics[width=17cm]{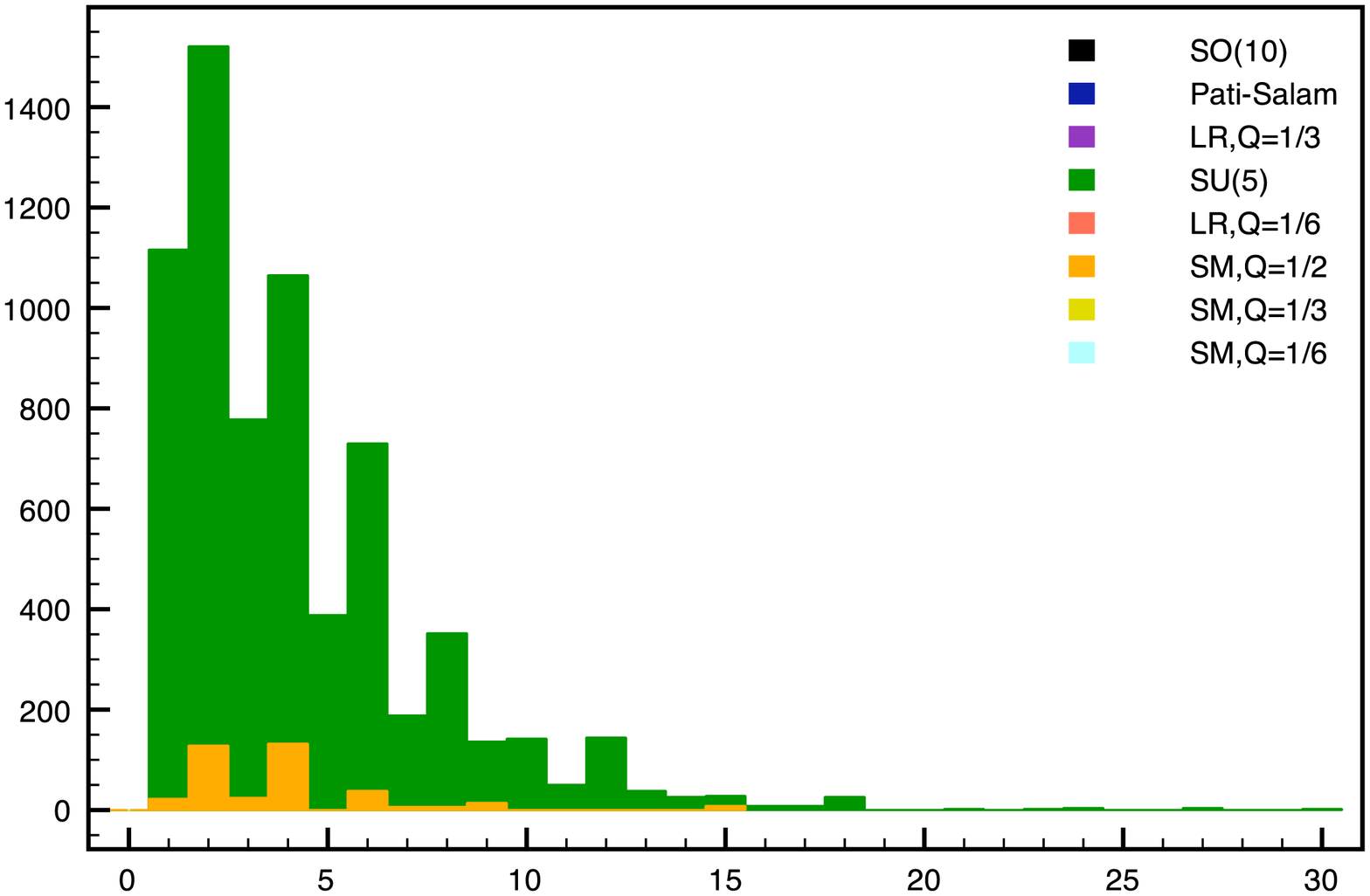}
\caption{\small Distribution of the number of families for permutation orbifolds of B-L lifted (lift A) Gepner Models.}
\label{famplot_liftA}
\end{center}
\end{figure}

\begin{figure}[p]
\begin{center}
\includegraphics[width=17cm]{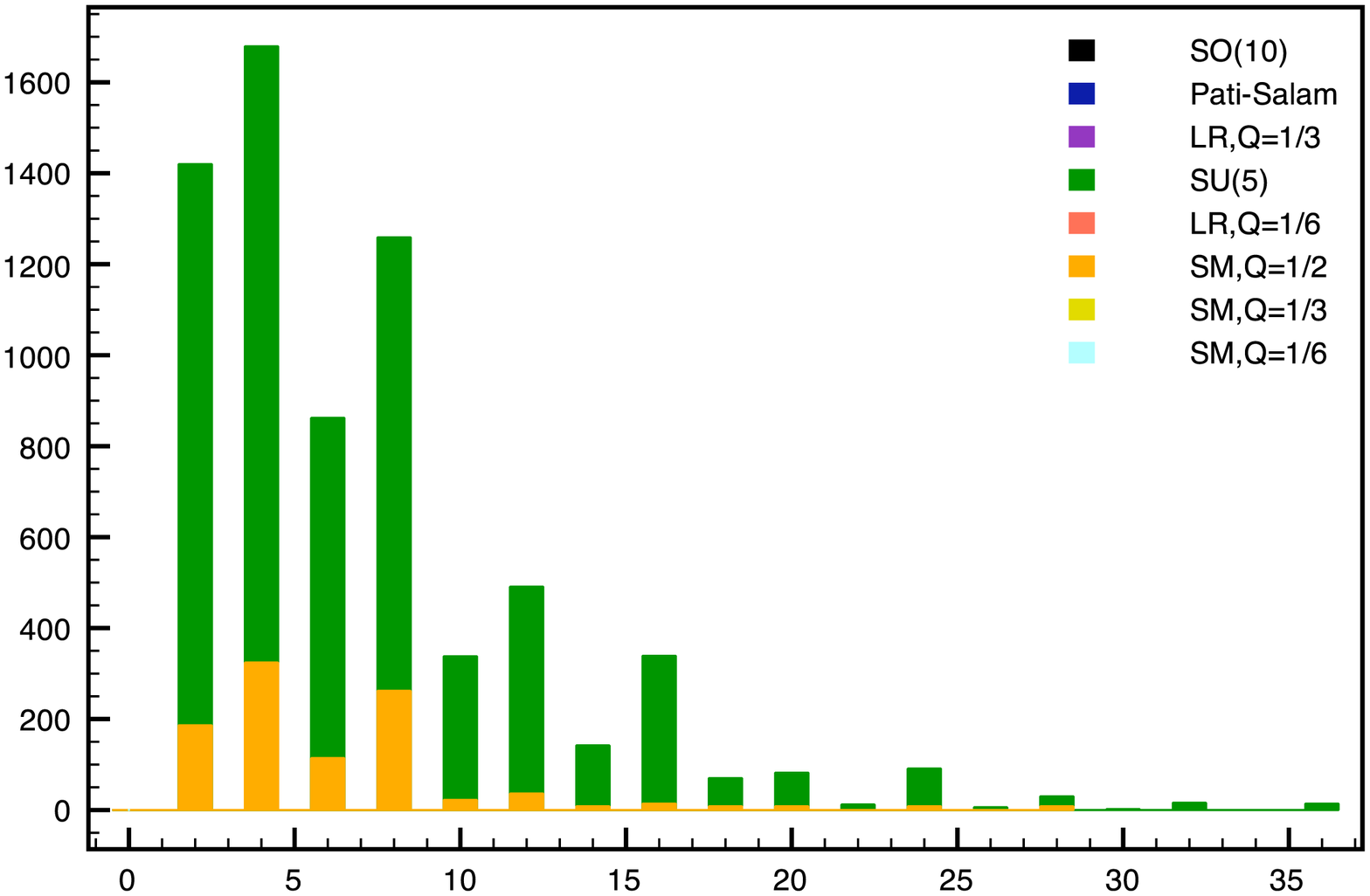}
\caption{\small Distribution of the number of families for permutation orbifolds of B-L lifted (lift B) Gepner Models.}
\label{famplot_liftB}
\end{center}
\end{figure}

\end{document}